\newtheorem{theorem}{Theorem}[section]
\newtheorem{lemma}[theorem]{Lemma}
\journal{Nuclear Physics B}
\newcommand{\diff}{\,\mathrm{d}}
\newcommand{\var}{\mathbb{V}\text{ar}}
\DeclareMathOperator*{\argmin}{\arg\!\min}
\DeclareMathOperator*{\argmax}{\arg\!\max}
\newcommand{\RomanNumeralCaps}[1]
    {\text{\MakeUppercase{\romannumeral #1}}}
\begin{document}

\begin{frontmatter}

%% Title, authors and addresses

%% use the tnoteref command within \title for footnotes;
%% use the tnotetext command for theassociated footnote;
%% use the fnref command within \author or \affiliation for footnotes;
%% use the fntext command for theassociated footnote;
%% use the corref command within \author for corresponding author footnotes;
%% use the cortext command for theassociated footnote;
%% use the ead command for the email address,
%% and the form \ead[url] for the home page:
%% \title{Title\tnoteref{label1}}
%% \tnotetext[label1]{}
%% \author{Name\corref{cor1}\fnref{label2}}
%% \ead{email address}
%% \ead[url]{home page}
%% \fntext[label2]{}
%% \cortext[cor1]{}
%% \affiliation{organization={},
%%            addressline={}, 
%%            city={},
%%            postcode={}, 
%%            state={},
%%            country={}}
%% \fntext[label3]{}

\title{Time-Averaged Drift Approximations are Inconsistent for Inference in Drift Diffusion Models} %% Article title

%% use optional labels to link authors explicitly to addresses:
%% \author[label1,label2]{}
%% \affiliation[label1]{organization={},
%%             addressline={},
%%             city={},
%%             postcode={},
%%             state={},
%%             country={}}
%%
%% \affiliation[label2]{organization={},
%%             addressline={},
%%             city={},
%%             postcode={},
%%             state={},
%%             country={}}

\author[label1]{Sicheng Liu\corref{cor1}} %% Author name
\ead{sicheng_liu@brown.edu}
\author[label2]{Alexander Fengler}
\ead{alexander_fengler@brown.edu}
\author[label2,label3]{Michael J. Frank}
\ead{michael_frank@brown.edu}
\author[label1]{Matthew T. Harrison}
\ead{matthew_harrison@brown.edu}

\cortext[cor1]{Corresponding author.}
%% Author affiliation
\affiliation[label1]{organization={Division of Applied Mathematics, Brown University},%Department and Organization
            addressline={182 George St}, 
            city={Providence},
            postcode={02912}, 
            state={RI},
            country={USA}}
\affiliation[label2]{organization={Department of Cognitive and Psychological Sciences, Brown University},%Department and Organization
            addressline={190 Thayer St}, 
            city={Providence},
            postcode={02912}, 
            state={RI},
            country={USA}}
\affiliation[label3]{organization={Carney Institute for Brain Science, Brown University},%Department and Organization
            addressline={164 Angell St}, 
            city={Providence},
            postcode={02912}, 
            state={RI},
            country={USA}}

%% Abstract
\begin{abstract}
 Drift diffusion models (DDMs) have found widespread use in computational neuroscience, cognitive science, mathematical psychology as well as other fields. They model evidence accumulation in simple decision tasks as a stochastic process drifting towards decision barriers. In models where the drift is both time-varying within a trial and variable across trials, the high computational cost for accurate likelihood evaluation has often led to the use of a computationally convenient surrogate for parameter inference, the time-averaged drift approximation (TADA). In each trial, TADA assumes that the time-varying drift rate can be replaced by its temporal average throughout the trial. This approach enables fast parameter inference using analytical likelihood formulas for DDMs with constant drift. In this work, we show that such an estimator is inconsistent: it does not converge to the true drift, posing a risk of biasing scientific conclusions when parameter estimates are obtained by TADA and similar approximations. We provide an elementary proof of this inconsistency in what is perhaps the simplest possible setting: a Brownian motion with piecewise constant drift hitting a one-sided upper boundary. Furthermore, numerical examples based on an attentional DDM (aDDM) show that using TADA leads to systematic misestimation of attentional effects in decision making and can lead to false conclusions in scientific hypothesis testing.

% we show that this example exhibits severe inconsistency: the estimator will always converge to a value strictly greater than the true drift parameter. 
\end{abstract}

% %%Graphical abstract
% \begin{graphicalabstract}
% %\includegraphics{grabs}
% \end{graphicalabstract}

% %%Research highlights
% \begin{highlights}
% \item Drift diffusion models are commonly used to model simple decision-making scenarios.
% \item A common computational shortcut replaces a time-varying drift with a constant drift.
% \item Counterexamples show inconsistent statistical estimates by proof and simulation.
% \item Systematic bias can occur when used with attentional drift diffusion models.
% \item Caution is recommended when using the shortcut to interpret experimental data.
% \end{highlights}
% \begin{highlights}
% \item We study a common unjustified computation shortcut in diffusion model fitting.
% \item It replaces the time-varying drift with a constant ``effective'' drift. 
% \item It is not a well specified model for first passage time data. 
% \item Counterexamples show inconsistent maximum likelihood estimates by proof and simulation.
% %\item Simulation quantifies the bias in estimating attention effects.
% \item We caution that attention effects can be misestimated and conclusions can be misleading.
% \end{highlights}

%% Keywords
\begin{keyword}
%% keywords here, in the form: keyword \sep keyword

%% PACS codes here, in the form: \PACS code \sep code

%% MSC codes here, in the form: \MSC code \sep code
%% or \MSC[2008] code \sep code (2000 is the default)
Drift diffusion model \sep first passage time \sep attentional drift diffusion model \sep time-averaged drift approximation \sep likelihood-based inference \sep statistical inconsistency
\end{keyword}

\end{frontmatter}

%% Add \usepackage{lineno} before \begin{document} and uncomment 
%% following line to enable line numbers
%% \linenumbers

%% main text
%%

\section{Introduction}
%\subsection{Drift Diffusion Model}
% \textcolor{blue}{[Description of DDM (copy paste from NESS2024 poster), need to be rewritten later; reduce the length or delete if submitted to a neuroscience journal]}

\textbf{Drift diffusion models (DDMs)} are an important class of models that describe the latent psychological processes underlying simple decision-making scenarios such as two-alternative forced choice experiments \citep{ratcliff2008, ratcliff1978, ratcliff2016}. In such experiments, a subject must choose between two alternatives, A and B, and the experimenter records both the choice and the time taken to make it. DDMs are used to model the joint distribution over response times and choices and are widely applied in empirical research \citep{Peters2019, Clithero2016Response, Myers2022A, Pedersen2017The, milosavljevic2010drift} to extract their interpretable parameters, which can in turn be linked to distinct cognitive and neural mechanisms. 

The standard DDM is described by the stochastic differential equation
\begin{equation}\label{eqn: ddm}
\diff X(t)=\mu \diff t+\sigma \diff W(t)
\end{equation}
starting at $X(0)=x_0$, where $W(t)$ is a 1D Brownian motion. Decision dynamics are captured by $X(t)$ before the stopping time $\tau$, where 
\begin{equation*}
\tau\triangleq \inf \{t>0: X(t)\notin (\ell, u)\}    
\end{equation*}
is the first hitting time of $X(t)$ to the upper boundary $u$ or the lower boundary $\ell$, indicating the decision time. The boundary reached at time $\tau$ reflects the chosen outcome, which we denote as $C\in\{u, \ell\}$. The parameter vector $\boldsymbol{\theta}$ may include $\mu, \sigma, x_0, u, \ell$ or any further parameters that define them. The likelihood function is the \textbf{first passage time density (FPTD)} at the boundary $C$ evaluated at $\tau$ when viewed as a function of the parameter vector $\boldsymbol{\theta}$. For the standard DDM, the analytic solution for the density and distribution of $\tau$ can be mathematically derived in closed form \citep{navarro2009fast, gondan2014even, cox2017theory, blurton2012fast}, enabling tractable likelihood-based inference for its parameters. As a result, this model has found widespread applications in cognitive neuroscience and mathematical psychology \citep{ratcliff1978, ratcliff2008, cavanagh2014eye, frank2015, doi2020, yartsev2018, fengler2022beyond}.

The standard DDM can be extended in a variety of ways to better capture experimental data and to test hypotheses about the biological mechanisms underlying decision-making. Examples include models with time-varying decision boundaries $u(t)$ and $\ell(t)$ instead of constant thresholds $u$ and $\ell$ \citep{cisek2009decisions, Drugowitsch2012, malhotra2018, palestro2018, ging-jehli_basal_ganglia_2025}, models with more complex stochastic dynamics than Brownian motion \citep{wieschen2020jumping, rasanan2024there}, and models with trial-by-trial drift rates that covary with exogenous covariates, where each trial $i$ has its own drift $\mu_i$ or $\mu_i(t)$, rather than sharing a common drift $\mu$ or $\mu(t)$ across trials
 \citep{Turner2015Informing, frank2015, wiecki2013hddm, fengler2022beyond}. Many of these extensions typically do not admit closed-form expressions for their FPTDs, but  numerical methods have been developed to compute the FPTDs of DDMs to facilitate their application, including approaches based on solving Kolmogorov partial differential equations (PDEs) \citep{shinn2020flexible, diederich2003simple, voss2007fast, voss2008fast, rasanan2023numerical} and integral equations (IEs) \citep{smith2000stochastic, smith2022modeling, paninski2008integral, peskir2002integral}. For a comprehensive review of these methods, we refer the readers to \cite{richter2023diffusion}.

When the parameters describing a DDM are both time-varying and trial-varying, for example, when the drift rate is modulated by covariates that vary on a moment-by-moment basis in time, the computational cost of computing different FPTDs on each trial can become too large for practical use. A common approach in this situation is to approximate a time-varying-parameter model with a constant-parameter model that permits fast FPTD evaluation. An open question in the literature is the degree to which these commonly-used approximations lead to valid conclusions about the underlying time-varying models of scientific interest. 

In this paper, we demonstrate both mathematically and empirically that constant-parameter approximations can induce substantial statistical bias in parameter inference and, in particular, lead to erroneous conclusions in hypothesis testing, raising serious concerns about their use. We focus on a class of DDMs called attentional DDMs (aDDMs) and a class of constant-parameter approximations that we call time-averaged drift approximations (TADA). As described in Section \ref{sec: addm}, aDDMs have time-varying and trial-varying drift rates, and TADA for the aDDM replaces the time-varying drift on each trial with its temporal average over the trial. We focus on these example cases for the following reasons. (i) They are of current practical importance. (ii) Simple versions facilitate exact mathematical analysis of large-sample statistical behavior. (iii) Most existing uses of constant-parameter approximations arise specifically in aDDMs via approaches related to TADA. (iv) In parallel work \citep{liu2025efficient} we developed \textsc{efpt}, an efficient method called efpt for computing FPTDs in aDDMs, allowing us to numerically compare true and approximate methods for this class of models with ease.

\section{Attentional Drift Diffusion Models}\label{sec: addm}
The \textbf{attentional drift diffusion model (aDDM)} incorporates visual attention as a key factor underlying the decision process behind simple choice behavior \citep{krajbich2010visual}. It provides a canonical example of a DDM with trial-varying and time-varying drift: within each trial, the drift rate $\mu(t)$ is modulated by instantaneous gaze, taking different values depending on whether the participant is fixating option A or option B at time $t$. Specifically, if we let $V(t)$ be the process taking values in $\{\mathrm{A}, \mathrm{B}\}$ representing the participant's visual fixation at time $t$, and let $r_A$ and $r_B$ denote the participant's numerical ratings of their preferences for options A and B, respectively. The aDDM corresponds to the following SDE (conditioned on $V, r_A, r_B$) \citep{krajbich2010visual},
\begin{equation}\label{eqn: addm}
	\begin{aligned}
		&\diff X(t)=A(V(t), r_A, r_B) \diff t+\sigma \diff W(t)
	\end{aligned}
\end{equation}
where
\begin{equation*}
A(v, r_A, r_B)=\begin{cases}
			\kappa(r_A-\eta r_B) & \text{if }v=\mathrm{A} ,\\
			\kappa(\eta r_A-r_B) & \text{if }v=\mathrm{B} ,\\
		\end{cases}
\end{equation*}
$\eta\in(0, 1)$ is a parameter that reflects the discounting of the non-fixated option, and $\kappa>0$ parametrizes the overall magnitude of the drift. 

In a typical experimental setting, a participant's fixation trajectory $V(t)$ is captured using an eye-tracker, while the participant's preference ratings $r_A$ and $r_B$ are collected prior to the choice experiment; together, these are treated as known covariates. Note that $\mu(t)=A(V(t), r_A, r_B)$ is a piecewise constant function that is generally different on each experimental trial, both because a participant's visual fixation process will generally be different on each trial and because the options presented to the participant can change across trials. As in Eq.~\eqref{eqn: addm} we often suppress the dependence on trials in our notation.

The dependence on within-trial covariates, here specifically the fixation times, poses significant computational challenges. The analytical results available for the standard DDM no longer apply, as the drift term in the aDDM varies over time. Moreover, because fixation trajectories are unique for each trial, computations cannot be shared across trials, necessitating likelihood evaluation on a trial-by-trial basis. As a result, existing numerical methods, such as the Kolmogorov equation methods and integral equation methods, become impractical for the large datasets typically encountered in experimental studies. Parameter inference in the aDDM can take days or even weeks for sizable datasets, severely limiting its applicability despite its strong scientific interest. This difficulty is further compounded by the discontinuous nature of fixation-dependent drift, which can complicate the use of standard discretization for PDE/IE based methods.

\section{Time-averaged Drift Approximation}\label{sec: glam}

% \textcolor{blue}{[This section needs more concrete cites]}

Due to these aforementioned computational challenges, researchers have resorted to computationally efficient approximations of the aDDM for likelihood computation to enable statistical inference. One common approach is to adopt what we call the \textbf{time-averaged drift approximation (TADA)}. Broadly speaking, this method aims to simplify the computation by summarizing the fixation process $V(t)$ through the proportion of total gaze time allocated to each option to construct a constant ``effective'' drift, rather than modeling the full sequence of fixation trajectories. Then, the analytical likelihood formula of a standard DDM with that ``effective'' drift is used to compute the trial likelihood. This approach simplifies away the computational difficulty associated with within-trial covariates, focusing instead on the overall influence of gaze for each trial, and allows for fast statistical computations in practice. 

It is worth noting that implementation details may vary across studies employing TADA, although the basic idea is shared. For instance, the piecewise constant averaging method \citep{lombardi_hare_2021} conjectures that for each sample path, conditioned on its first passage time $\tau$, a DDM with drift function $\mu(t)$ can be viewed as a standard DDM whose drift coefficient is given by 
\begin{equation}\label{eqn: equi_drift}
\begin{aligned}
\frac{1}{\tau}\int_0^\tau \mu(t)\diff t,
\end{aligned}
\end{equation}
and since the aDDM has a piecewise constant drift function $\mu(t)=A(V(t), r_A, r_B)$, this ``effective'' drift ends up being
\begin{equation}
\begin{aligned}
&\frac{1}{\tau}\int_0^\tau A(V(t), r_A, r_B)\diff t=\\&\quad\frac{\text{time fixated at A}}{\tau}A(\text{A}, r_A, r_B)+\frac{\text{time fixated at B}}{\tau}A(\text{B}, r_A, r_B),
\end{aligned}
\end{equation}
i.e., the temporal average drift in the aDDM weighted by gaze times. The gaze-weighted linear accumulator model (GLAM) \citep{thomas2019gaze, molter2019glambox} applies an additional sigmoid transformation to map the gaze-weighted option values to the “effective” drift rate. GLAM also generalizes the DDM to multiple choice options.
Nevertheless, the essential assumption underlying all such variants is that only the gaze proportions matter for inference, a simplification that substantially improves computational tractability. 

% The logic extends to other gaze-weighted models in the literature, such as e.g. logistic regression models of choice proportion \citep{smith2019estimating}. %Throughout this paper, we use the term TADA to refer broadly to all models that adopt this conceptual framework.

However, despite its widespread usage, the suitability of using TADA to approximate the aDDM \eqref{eqn: addm} has not been quantitatively investigated.   To begin with, regardless of how the ``effective'' drift is defined, TADA is not a well-specified generative model for first passage times since its ``effective'' drift (e.g., in equation \eqref{eqn: equi_drift}) depends on the first passage time $\tau$ itself. Consequently, the so-called TADA ``likelihood'' is not a valid likelihood function (see also Figure \ref{fig: llhd_comparison}) and its corresponding estimators can only be defined in a formal sense. Nevertheless, it has been argued that the aDDM and its piecewise constant averaging variant are equivalent \citep{lombardi_hare_2021}. More commonly, researchers have used TADA as a computational shortcut to fit and test the aDDM \citep{lupkin2023monkeys, sepulveda2020visual, ramirez2022optimal, yang2023dynamic, yang2024attention, ting2024unraveling, zilker2024attentional, thomas2019gaze, thomas2021uncovering, cavanagh2014eye}, implicitly suggesting that, although their likelihoods may differ for individual sample paths, TADA provides a reasonable approximation to the aDDM at the population level of all trial-level covariates. Under this assumption, the maximum likelihood or Bayesian estimators derived from TADA could remain consistent, making it a useful surrogate model for statistical inference and hypothesis testing. 

In this work we show that both statements are technically invalid: TADA and the aDDM have different likelihood functions (unsurprisingly), and the TADA ``maximum likelihood'' estimator is inconsistent, as demonstrated by a concrete counterexample in Section \ref{sec: example}. Numerical experiments further demonstrate this inconsistency and show how it can, in some circumstances, lead to false conclusions in scientific hypothesis testing. Although our analysis does not necessarily invalidate all prior qualitative conclusions based on TADA, it raises serious concerns about its use.

\section{Mathematical Analysis}

In this section we formally prove that a statistical estimator based on a TADA can be statistically inconsistent, meaning that the estimator converges to the wrong value as the number of trials increases. We illustrate inconsistency in a simple example that permits exact mathematical analysis.  
%For the counterexample, we consider the case where only one constant boundary is present, as it is sufficient to demonstrate the inconsistency and simplifies the analytical formulation.

\subsection{Preliminaries}
Here we first summarize the analytical formulas used in the next subsection:
\begin{theorem}\label{thm: 2.1}
Let $X(t)=x_0+\mu t+\sigma W(t)$, and for $b\neq x_0$, let $\tau_b=\inf\{t\ge0:X(t)=b\}$ be the first passage time of $X(t)$ to $b$, then 
\begin{enumerate}
\item For $x$ satisfying $(x-b)(x_0-b)>0$ ($x$ and $x_0$ are on the same side of $b$), the non-passage density (NPD), which is defined as the sub-probability density of $X(t)$ when the process has not yet hit $b$, is given by
\begin{equation}\label{eqn: thm211}
\begin{aligned}
&\mathbb{P}^{x_0}(X(t)\in  \diff  x, \tau_b>t)\\
=&\frac{1}{\sqrt{2 \pi \sigma^2 t}} e^{\frac{\mu}{\sigma^2} (x-x_0)-\frac{\mu^2}{2\sigma^2}t}\Big(e^{-\frac{\left(x-x_0\right)^2}{2 \sigma^2 t}}-e^{-\frac{\left(x+x_0-2 b\right)^2}{2 \sigma^2 t}}\Big) \diff  x
\end{aligned}
\end{equation}
where $\mathbb{P}^{x_0}$ denotes the probability measure under which the process starts at $x_0$.

    \item For $t>0$, the first passage time density (FPTD), which is defined as the density of $\tau_b$, is given by
\begin{equation}\label{eqn: thm212}
\begin{aligned}
\mathbb{P}^{x_0}(\tau_b\in  \diff  t)=&\frac{|b-x_0|}{\sqrt{2\pi \sigma^2 t^3}}e^{-\frac{(b-x_0-\mu t)^2}{2\sigma^2t}} \diff  t\\
\end{aligned}
\end{equation}
When $\mu(b-x_0)>0$, $\tau_b$ is an inverse Gaussian random variable\footnote{The inverse Gaussian distribution $\mathrm{IG}(m, \lambda)$ is a family of continuous probability distributions, where $m>0$ is the mean and $\lambda>0$ is the shape parameter. It is supported on $x\in (0, \infty)$, with probability density function and cumulative distribution function given by
\begin{equation*}
\begin{aligned}
f(x ; m, \lambda)&=\sqrt{\tfrac{\lambda}{2 \pi x^3}} e^{-\frac{\lambda(x-m)^2}{2 m^2 x}}\\
F(x ; m, \lambda)&=1-\tfrac{1}{2} \operatorname{Erfc}\Big(\sqrt{\tfrac{\lambda
}{2x}}\Big(\tfrac{x}{m}-1\Big)\Big)+\tfrac{1}{2} e^{\frac{2\lambda}{m}}\operatorname{Erfc}\Big(\sqrt{\tfrac{\lambda
}{2x}}\Big(\tfrac{x}{m}+1\Big)\Big)
\end{aligned}
\end{equation*}
respectively, where $\operatorname{Erfc}$ is the complementary error function defined by $\operatorname{Erfc}(x)=\frac{2}{\sqrt{\pi}} \int_x^{\infty} e^{-z^2} \diff z$. Its $m=\infty$ limit is the L\'{e}vy distribution, corresponding to the special case $\mu=0$ in Theorem \ref{thm: 2.1}, which directly follows from the reflection principle of Brownian motion.}\\$\operatorname{IG}\left(\frac{b-x_0}{\mu},\left(\frac{b-x_0}{\sigma}\right)^2\right)$;
When $\mu(b-x_0)<0$, Eq.\eqref{eqn: thm212} is a subprobability density as $\mathbb{P}(\tau_b<\infty)=e^{\frac{2(b-x_0) \mu}{\sigma^2}}<1$.
\end{enumerate}
\end{theorem}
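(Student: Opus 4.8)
The plan is to derive both formulas for \emph{driftless} Brownian motion first, using the reflection principle, and then transfer them to the drifted process via the Cameron--Martin--Girsanov change of measure, exploiting the fact that for \emph{constant} drift the Radon--Nikodym density depends on the path only through the current position and elapsed time. For the non-passage density (NPD) I would fix, without loss of generality, $x_0 < b$, so that $\{\tau_b > t\} = \{\max_{s\le t} X(s) < b\}$. For $\mu = 0$ the reflection principle identifies, for $x < b$, the event $\{X(t)\in\diff x,\ \tau_b\le t\}$ with the reflected event ending at $2b - x$, giving $\mathbb{P}^{x_0}_0(X(t)\in\diff x,\ \tau_b\le t) = p_t(x_0, 2b - x)\diff x$ where $p_t$ is the Gaussian transition density with variance $\sigma^2 t$; subtracting this from the unconstrained Gaussian density produces the bracketed difference of two Gaussians in Eq.~\eqref{eqn: thm211} (using $(2b - x - x_0)^2 = (x + x_0 - 2b)^2$). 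I would then write the change-of-measure density on $\mathcal{F}_t$ as $\exp\big(\tfrac{\mu}{\sigma^2}(X(t) - x_0) - \tfrac{\mu^2}{2\sigma^2}t\big)$; since $\{\tau_b > t\}\in\mathcal{F}_t$ and the exponent involves only $X(t)$, on $\{X(t)\in\diff x\}$ this factor becomes the deterministic prefactor $\exp\big(\tfrac{\mu}{\sigma^2}(x - x_0) - \tfrac{\mu^2}{2\sigma^2}t\big)$ multiplying the driftless NPD, reproducing Eq.~\eqref{eqn: thm211} exactly.

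For the first passage time density (FPTD) I would again begin from the driftless case, where reflection gives the L\'evy density $\tfrac{|b - x_0|}{\sqrt{2\pi\sigma^2 t^3}}\,e^{-(b - x_0)^2/(2\sigma^2 t)}$, and then apply the same change of measure evaluated at the stopping time $\tau_b$. On $\{\tau_b\in\diff t\}$ we have $X(\tau_b) = b$, so the density factor is $\exp\big(\tfrac{\mu}{\sigma^2}(b - x_0) - \tfrac{\mu^2}{2\sigma^2}t\big)$; multiplying and completing the square via the identity $\tfrac{\mu}{\sigma^2}(b - x_0) - \tfrac{\mu^2}{2\sigma^2}t - \tfrac{(b - x_0)^2}{2\sigma^2 t} = -\tfrac{(b - x_0 - \mu t)^2}{2\sigma^2 t}$ yields Eq.~\eqref{eqn: thm212}. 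The inverse-Gaussian identification when $\mu(b - x_0) > 0$ is then a direct parameter match, $m = (b - x_0)/\mu$ and $\lambda = ((b - x_0)/\sigma)^2$ (both positive precisely in this regime), and the sub-probability total mass $e^{2\mu(b - x_0)/\sigma^2}$ when $\mu(b - x_0) < 0$ follows either by integrating Eq.~\eqref{eqn: thm212} in $t$ or, more cleanly, by optional stopping applied to the exponential martingale $e^{-2\mu X(t)/\sigma^2}$ on the two-sided exit from $(a,b)$ and letting $a\to-\infty$.

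The main obstacle is the rigorous justification of the change of measure \emph{at the stopping time} $\tau_b$ in the sub-probability regime: the measures $\mathbb{P}_\mu$ and $\mathbb{P}_0$ are mutually absolutely continuous only on each finite horizon $\mathcal{F}_t$ --- indeed mutually singular on $\mathcal{F}_\infty$ when the drifts differ, since $X(t)/t\to\mu$ almost surely --- and $\tau_b$ may be infinite when $\mu(b - x_0) < 0$. I would handle this by localizing to $\mathcal{F}_{\tau_b\wedge t}$ and transferring the density only on the finite-time event $\{\tau_b\le t\}$, which already determines the density of $\tau_b$ on all of $(0,\infty)$ and never invokes the singular behavior on $\{\tau_b = \infty\}$. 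Alternatively, to sidestep the stopping-time technicalities entirely, I could obtain the FPTD as the probability flux at the absorbing boundary: writing $p(x,t)$ for the NPD of Eq.~\eqref{eqn: thm211}, one has $p(b,t) = 0$, so the convective part of the flux vanishes and the absorption rate reduces to the diffusive term $\tfrac{\sigma^2}{2}\big|\partial_x p(x,t)\big|_{x=b}$, a purely computational route that reproduces Eq.~\eqref{eqn: thm212} after one differentiation.
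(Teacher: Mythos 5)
Your proposal is correct, but it follows a genuinely different route from the paper's. The paper works entirely on the PDE side: it characterizes the NPD as the solution of the Kolmogorov forward (Fokker--Planck) equation with an absorbing boundary condition at $b$, removes the drift by the exponential substitution $u(x,t)=e^{\frac{\mu}{\sigma^2}(x-x_0)-\frac{\mu^2}{2\sigma^2}t}\,v(x-x_0,t)$, solves the resulting heat equation by the method of images, and then obtains the FPTD as the boundary flux $-\frac{\sigma^2}{2}u_x(b,t)$, with the convective term dropping out because $u(b,t)=0$ --- exactly the ``flux'' alternative you sketch in your last sentence. Your argument is the probabilistic mirror image of this: the reflection principle plays the role of the method of images, and the Cameron--Martin--Girsanov factor $\exp\big(\tfrac{\mu}{\sigma^2}(X(t)-x_0)-\tfrac{\mu^2}{2\sigma^2}t\big)$ plays the role of the exponential substitution; your key observation that this factor depends on the path only through $(X(t),t)$, so it passes through the restriction to the $\mathcal{F}_t$-event $\{\tau_b>t\}$ as a deterministic prefactor, is precisely what makes the transfer to Eq.~\eqref{eqn: thm211} legitimate. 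Your treatment of the stopping-time subtlety is also sound: localizing the density to $\mathcal{F}_{\tau_b\wedge t}$ and evaluating it at $X(\tau_b)=b$ on $\{\tau_b\le t\}$ never invokes absolute continuity on $\mathcal{F}_\infty$ (which indeed fails), and this is essentially the classical exponential-martingale/optional-sampling derivation that the paper cites (Karatzas--Shreve) but deliberately does not follow, since the authors prefer the PDE method for its generalizability to two-sided boundaries and more complex diffusions. What each approach buys: yours is self-contained at the level of elementary stochastic calculus and additionally verifies the total-mass claim $\mathbb{P}(\tau_b<\infty)=e^{2\mu(b-x_0)/\sigma^2}$ via gambler's-ruin optional stopping, a statement the paper's appendix asserts but never proves; the paper's route requires accepting the forward-equation characterization of the absorbed density as given, but then reduces everything to routine PDE manipulations that extend mechanically to settings where no reflection argument exists.
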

Formula \eqref{eqn: thm212} can be derived via various approaches, for example see page 96 of \cite{karatzas1991brownian} or Chapter 5.4 of \cite{cox2017theory}. The classical approach involves using the optional sampling theorem to study the Laplace transform of $\tau_b$ by constructing suitable exponential martingales. For completeness we include a self-consistent proof of both \eqref{eqn: thm211} and \eqref{eqn: thm212} in \ref{sec: proof_prelim}.

\subsection{An Inconsistent Example}\label{sec: example}
Next, we give a simple example to show our two core results:

\begin{enumerate}
    \item Conditional on the first passage time $\tau$, TADA and the aDDM produce different first passage time densities (FPTDs), so TADA cannot be viewed as equivalent to the aDDM per sample path. In fact, the TADA ``FPTD'' function does not even integrate to 1 with respect to $\tau$.
    \item In this example, as the sample size $n\rightarrow\infty$, the TADA estimator always converges to a value strictly greater than the true drift parameter. Hence, it is statistically inconsistent.
\end{enumerate}

We consider a Brownian motion with piecewise constant drift given by
\begin{equation}\label{eqn: model1}
\diff X(t)=\mu\mathbbm{1}_{[0, T]}(t) \diff t+\diff W(t)
\end{equation}
starting at $X(0)=0$ and a constant upper boundary $u(t)=b$, where $b>0$ and $T>0$ are known constants, $\mu \in \mathbb{R}$ is the drift parameter to be inferred, and $\mathbbm{1}_{[0, T]}(t)$ denotes the indicator function that takes value $1$ if $t \in[0, T]$ and $0$ otherwise. Let $\tau=\inf\{t>0: X(t)=b\}$ be the first passage time of $X(t)$ to level $b$. % We also require that $\mu>0$. % so that $X(t)$ will hit $b$ with probability 1. 
This can be viewed either as an aDDM with only one boundary or as an attentional race model with only one accumulator; in both cases, there is one fixed saccade.

We also define 
\begin{equation}
\alpha(\mu, t,T)=\begin{cases}
\mu&t\le T\\
\frac{T}{t}\mu &t>T
\end{cases}
\end{equation}
to represent the average of the drift term until time $t$. This is the ``effective'' drift \eqref{eqn: equi_drift} constructed by the piecewise constant averaging method \citep{lombardi_hare_2021}, which we adopt as the representative form of TADA in our subsequent analysis.

\subsubsection{Non-equivalence of FPTDs}\label{sec: non-equivalence}
    When $\tau\le T$, $\tau$ is distributed according to Eq.\eqref{eqn: thm212}; when $\tau>T$, we view this process as a DDM with a random initial position $x_0$, where its density is specified by the NPD in Eq.\eqref{eqn: thm211}. To obtain the distribution of $\tau$ in this case, we integrate Eq.\eqref{eqn: thm212} over the NPD. To be more specific, when $\tau>T$, the FPTD $f_{\text{aDDM}}(\tau;\mu)$ is given by $\int_{-\infty}^b \frac{1}{\sqrt{2\pi T}}e^{\mu x-\frac{\mu^2}{2}T} \Big(e^{-\frac{x^2}{2T}}-e^{-\frac{(x-2b)^2}{2T}}\Big)\frac{b-x}{\sqrt{2\pi (\tau-T)^3}} e^{-\frac{(b-x)^2}{2(\tau-T)}} \diff x$, which simplifies to
    % As a result, the likelihood of $\mu$ given datum $\tau$ under model \eqref{eqn: model1} is the FPTD at $\tau$ as a function of the parameter $\mu$, given by
\begin{equation}\label{eqn: addmllhd}
\begin{aligned}
&f_{\text{aDDM}}(\tau;\mu)\\
=&\begin{cases}
\frac{b}{\sqrt{2\pi \tau^3}}e^{-\frac{(b-\mu\tau)^2}{2\tau}} & \tau \le T\\ 
\begin{array}{r}
\frac{1}{2 \sqrt{2 \pi\tau^3}}e^{-\frac{(T\mu-b)^2}{2 \tau}} \left[ e^{\frac{2 b\mu(\tau-T)}{\tau}}(T\mu+b) 
\operatorname{Erfc}\left((T\mu+b) \sqrt{\frac{\tau-T}{2T\tau}}\right) \right.  \\
\left. - (T\mu-b) \operatorname{Erfc}\left((T\mu-b) \sqrt{\frac{\tau-T}{2T\tau}}\right) \right] 
\end{array} & \tau > T
\end{cases}
\end{aligned}    
\end{equation}
where $\operatorname{Erfc}(x)=\frac{2}{\sqrt{\pi}}\int_x^\infty e^{-z^2}\diff z$ is the complementary error function\footnote{We use $\operatorname{Erfc}$ throughout this paper for notational consistency. However, it can be equivalently expressed in terms of the error function $\operatorname{Erf}$ or the normal CDF $\Phi$.} (see details of this calculation in \ref{sec: deriv1}). In contrast, the TADA ``FPTD'' is
\begin{equation}\label{eqn: glamllhd}
\begin{aligned}
f_{\text{TADA}}(\tau;\mu)&=\frac{b}{\sqrt{2\pi \tau^3}}e^{-\frac{(b-\alpha(\mu, \tau, T) \tau)^2}{2\tau}}=\begin{cases}
\frac{b}{\sqrt{2\pi \tau^3}}e^{-\frac{(b-\mu \tau)^2}{2\tau}}& \tau \le T\\ 
\frac{b}{\sqrt{2\pi \tau^3}}e^{-\frac{(b-T\mu)^2}{2\tau}}& \tau > T\\ 
\end{cases}
\end{aligned}
\end{equation}
When $\tau\le T$, Eq.\eqref{eqn: addmllhd} and Eq.\eqref{eqn: glamllhd} are the same, which corresponds to the case where the participant never switches their visual fixation. To compare Eq.\eqref{eqn: addmllhd} and Eq.\eqref{eqn: glamllhd} when $\tau>T$, we study their integrals w.r.t. $\tau$ from $T$ to $\infty$.
The true probability of hitting after $T$ is given by
\begin{equation}\label{eqn: survival_prob1}
\begin{aligned}
\mathbb{P}(\tau> T)=&1-\int_0^T \frac{b}{\sqrt{2\pi t^3}}e^{-\frac{(b-\mu t)^2}{2t}} \diff t\\
=&\frac{1}{2}\operatorname{Erfc}\Big(\frac{T \mu-b}{\sqrt{2T}}\Big)-\frac{1}{2}e^{2 b \mu} \operatorname{Erfc}\Big(\frac{T \mu+b}{\sqrt{2T} }\Big)
\end{aligned}
\end{equation}
as Eq.\eqref{eqn: addmllhd} is a valid probability density when viewed as a function of $\tau$. However, integrating Eq.\eqref{eqn: glamllhd} w.r.t. $\tau$ over $(T,\infty)$ gives
\begin{equation}\label{eqn: survival_prob2}
\int_T^\infty \frac{b}{\sqrt{2\pi t^3}}e^{-\frac{(b-T\mu )^2}{2t}} \diff t\\
=\begin{cases}
\frac{b}{T\mu-b}\Big(1-\operatorname{Erfc}\Big(\frac{T\mu-b}{\sqrt{2T}}\Big)\Big)& T\mu\neq b\\
    \sqrt{\frac{2  }{\pi T}}b & T\mu=b
\end{cases}
\end{equation}
We compare Eq.\eqref{eqn: survival_prob1} and Eq.\eqref{eqn: survival_prob2} when $T\mu=b$: \eqref{eqn: survival_prob1} reduces to $\frac{1}{2}\Big(1-e^{\frac{2 b^2}{T}} \operatorname{Erfc}\Big(\sqrt{\frac{2}{T}}b\Big)\Big)$ which is always $\le \frac{1}{2}$; In contrast, Eq.\eqref{eqn: survival_prob2} gives $\sqrt{\frac{2}{\pi T}}b$, which can be any positive number as $b$ and $T$ vary within $(0, \infty)$. As a result, the aDDM FPTD Eq.\eqref{eqn: addmllhd} and the TADA ``FPTD'' Eq.\eqref{eqn: glamllhd} are different for $\tau>T$ (see Fig.\ref{fig: llhd_comparison} for illustration). Hence, the aDDM and TADA are not equivalent, except in the trivial case with no saccades.

\begin{figure}[htbp]
  \centering
  \includegraphics[width=0.75\linewidth]{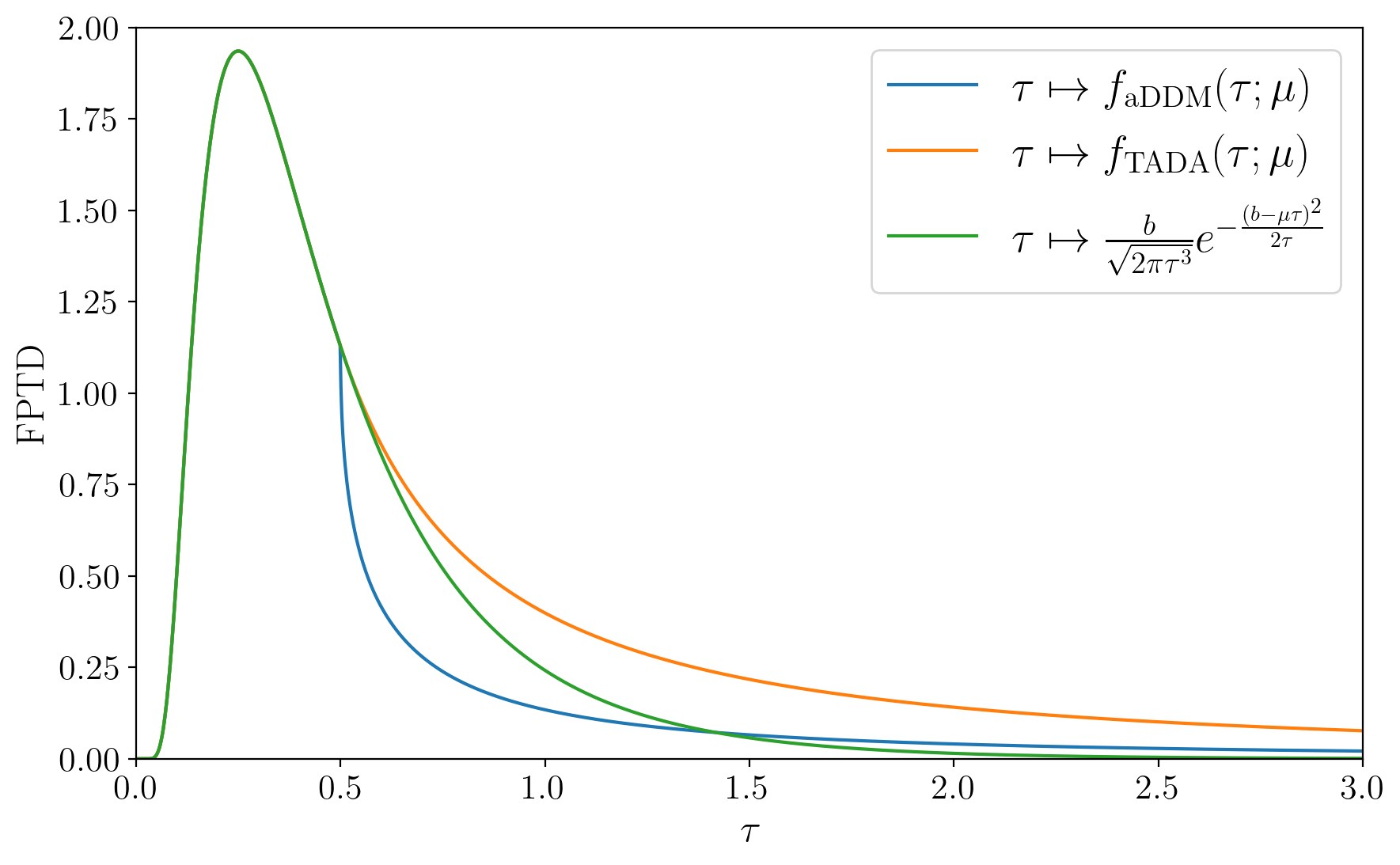}
  \caption{Plots of $f_{\text{aDDM}}$\eqref{eqn: addmllhd} (blue curve), $f_{\text{TADA}}$\eqref{eqn: glamllhd} (orange curve) as functions of $\tau$ when $\mu=2, b=1, T=0.5$. The additional green curve is the FPTD when the drift is constant $\mu$ instead of $\mu\mathbbm{1}_{[0,T]}(t)$. All three functions are the same when $\tau\le T$, but become different when $\tau>T$. The area under $\tau\mapsto f_{\text{aDDM}}(\tau;\mu)$ and $\tau\mapsto\frac{b}{\sqrt{2\pi\tau^3}} e^{-\frac{(b-\mu\tau)^2}{2\tau}}$ are each 1 while that of $\tau\mapsto f_{\text{TADA}}(\tau;\mu)$ is not 1. This discrepancy arises because TADA is not a well-specified model, and $f_{\text{TADA}}(\tau;\mu)$ is not a proper probability density function.}
  \label{fig: llhd_comparison}
\end{figure}

\subsubsection{Inconsistency of the TADA estimator}

In this section, we study the inference of parameter $\mu$ using the dataset $(\tau_i)_{i=1}^n$, where $\tau_1,\cdots,\tau_n$ are independent and identically distributed as $\tau$. 

We can write the maximum likelihood estimator (MLE)
\begin{equation}\label{eqn: addm_mle}
\widehat{\mu}_{n}^{\text{ML}}=\argmax_{\mu} \prod_{i=1}^n f_{\text{aDDM}}(\tau_i;\mu)
\end{equation}
which maximizes the likelihood in Eq.\eqref{eqn: addmllhd} on dataset $(\tau_i)_{i=1}^n$, and the corresponding TADA ``maximum likelihood'' estimator
\begin{equation}\label{eqn: glam_mle}
\widehat{\mu}_{n}^{\text{TADA}}=\argmax_{\mu} \prod_{i=1}^n f_{\text{TADA}}(\tau_i; \mu)
\end{equation}
The consistency of $\widehat{\mu}_n^{\text{ML}}$ follows from known results about the consistency of MLEs (see \ref{sec: regularities}). 

Here we prove that $\widehat{\mu}_n^{\text{TADA}}$ is not a consistent estimator of $\mu$. We write the TADA negative log-likelihood
\begin{equation*}
\begin{aligned}
&\ell_n^{\text{TADA}}(\mu\mid\tau_1, \cdots, \tau_n)\\
&\quad\triangleq -\sum_{i=1}^n \log f_{\text{TADA}}(\tau_i;\mu)\\
&\quad=-n\log(b)+\frac{n}{2}\log(2\pi) +\sum_{i=1}^n\Big(\frac{3}{2}\log\tau_i+\frac{1}{2\tau_i}(b-\alpha(\mu, \tau_i, T) \tau_i)^2\Big)
\end{aligned}
\end{equation*}
and $\widehat{\mu}_{n}^{\text{TADA}}$ \eqref{eqn: glam_mle} is given by
\begin{equation}
\begin{aligned}
\widehat{\mu}_{n}^{\text{TADA}}&=\argmin_{\mu} \ell_n^{\text{TADA}}(\mu\mid\tau_1, \cdots, \tau_n).
\end{aligned}
\end{equation}
Setting $\frac{\partial \ell_n^{\text{TADA}}}{\partial \mu}$ to $0$, we get
\begin{equation}\label{eqn: 7}
\begin{aligned}
0=&\frac{\partial \ell_n^{\text{TADA}}}{\partial \mu}(\widehat{\mu}_{n}^{\text{TADA}}\mid\tau_1, \cdots, \tau_n)\\
=&\sum_{i=1}^n \alpha_\mu(\widehat{\mu}_{n}^{\text{TADA}}, \tau_i, T)(\alpha(\widehat{\mu}_{n}^{\text{TADA}}, \tau_i, T)\tau_i-b),
\end{aligned}
\end{equation}
where
\begin{equation*}
\alpha_\mu(\mu, \tau, T)=\frac{\partial}{\partial\mu}\alpha(\mu, \tau, T)=\begin{cases}
1& t\le T\\
\frac{T}{t} & t>T
\end{cases}.
\end{equation*}
One can verify that the second derivative is $\sum_{i=1}^n (\alpha_\mu(\widehat{\mu}_{n}^{\text{TADA}}, \tau_i, T))^2\tau_i>0$ so $\widehat{\mu}_{n}^{\text{TADA}}$ minimizes $\ell_n^{\text{TADA}}$.
To proceed, we assume that $0<\tau_1\le\cdots\le\tau_m\le T< \tau_{m+1}\le\cdots \le\tau_n$, then Eq.\eqref{eqn: 7} becomes
\begin{equation}
\begin{aligned}
0&=\sum_{i=1}^m (\widehat{\mu}_{n}^{\text{TADA}}\tau_i-b)+\sum_{i=m+1}^n \frac{T}{\tau_i}(T\widehat{\mu}_{n}^{\text{TADA}}-b)\\
&=\Big(\sum_{i=1}^m\tau_i\Big)\widehat{\mu}_{n}^{\text{TADA}} -mb+\Big(\sum_{i=m+1}^n \frac{1}{\tau_i}\Big) T^2\widehat{\mu}_{n}^{\text{TADA}}-\Big(\sum_{i=m+1}^n \frac{1}{\tau_i}\Big)Tb\\
\end{aligned}
\end{equation}
which further simplifies to
\begin{equation}\label{eqn: 9}
\widehat{\mu}_{n}^{\text{TADA}}=\frac{mb+\left(\sum_{i=m+1}^n \frac{1}{\tau_i}\right)Tb}{\sum_{i=1}^m\tau_i+\left(\sum_{i=m+1}^n \frac{1}{\tau_i}\right)T^2}    
\end{equation}
Dividing both the numerator and denominator of Eq.\eqref{eqn: 9} by $n$, we can invoke the strong law of large numbers to get
\begin{equation}\label{eqn: 17}
\widehat{\mu}_{n}^{\text{TADA}}\xrightarrow{\text{a.s.}}\widetilde{\mu},\text{ where }\widetilde{\mu}\triangleq \frac{\mathbb{P}(\tau\le T)b+\mathbb{E}[\frac{T}{\tau}\mathbbm{1}_{\tau> T}]b}{\mathbb{E}[\tau\mathbbm{1}_{\tau\le T}]+\mathbb{E}[\frac{T}{\tau}\mathbbm{1}_{\tau> T}]T}
\end{equation}
Here $\xrightarrow{\text{a.s.}}$ denotes almost sure convergence, which is defined as $\mathbb{P}(\widehat{\mu}_{n}^{\text{TADA}}\rightarrow\widetilde{\mu})=1$.
Since $\mathbb{E}[\tau\mathbbm{1}_{\tau\le T}]<\mathbb{E}[T\mathbbm{1}_{\tau\le T}]=T\mathbb{P}(\tau\le T)$ and noting that the strict inequality follows from the fact that $\mathbb{P}(\tau<T)>0$, equation \eqref{eqn: 17} immediately gives $\widetilde{\mu}>\frac{b}{T}$. This shows that $\widehat{\mu}_{n}^{\text{TADA}}$ cannot be consistent as the true $\mu$ is allowed to be $\le \frac{b}{T}$. However, we can also prove the following stronger claim:
\begin{theorem}\label{thm: 2.2}
For any $\mu\in \mathbb{R}$, we always have $\widetilde{\mu}>\max\{\mu, \frac{b}{T}\}$.
\end{theorem}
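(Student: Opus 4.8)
The plan is to reduce the inequality $\widetilde{\mu} > \mu$ to a single clean probabilistic statement, and to obtain the bound $\widetilde{\mu} > b/T$ for free from the elementary argument already given in the text. Writing $p = \mathbb{P}(\tau\le T)$, $A = \mathbb{E}[\tau\mathbbm{1}_{\tau\le T}]$, and $B = \mathbb{E}[\frac{T}{\tau}\mathbbm{1}_{\tau>T}]$, the denominator $A+TB$ in $\widetilde\mu$ is positive, so the sign of $\widetilde{\mu} - \mu$ agrees with the sign of $(p+B)b - \mu(A+TB)$. The first key observation is a pathwise identity: since $X(\tau)=b$ and the accumulated drift up to time $\tau$ is exactly $\mu(\tau\wedge T)$, we have $W(\tau) = b - \mu(\tau\wedge T)$. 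Multiplying by the weight $\frac{\tau\wedge T}{\tau}$ (which is precisely $\alpha_\mu$, so this reduction is the natural one coming from the first-order condition \eqref{eqn: 7}) and taking expectations, I would verify the identity $(p+B)b - \mu(A+TB) = \mathbb{E}\!\left[\frac{\tau\wedge T}{\tau}W(\tau)\right]$, using that both the weight and $W(\tau)$ are bounded so all expectations are finite. The theorem then reduces, in the only nontrivial case $\mu > b/T$, to showing $\mathbb{E}[\frac{\tau\wedge T}{\tau}W(\tau)] > 0$.

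To organize the cases: if $\mu\le b/T$ then $\widetilde\mu > b/T \ge \mu$ already yields $\widetilde\mu > \max\{\mu,b/T\}$, so only $\mu > b/T$ requires new work. For that case I would decompose $\frac{\tau\wedge T}{\tau} = 1 - \frac{(\tau-T)^+}{\tau}$, so that $\mathbb{E}[\frac{\tau\wedge T}{\tau}W(\tau)] = \mathbb{E}[W(\tau)] - \mathbb{E}[\frac{(\tau-T)^+}{\tau}W(\tau)]$. On $\{\tau>T\}$ the identity above forces $W(\tau)=b-\mu T$, a constant, so the second expectation equals $(b-\mu T)\,\mathbb{E}[\frac{\tau-T}{\tau}\mathbbm{1}_{\tau>T}]$. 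Since $\mu>b/T$ makes $b-\mu T<0$ while the remaining factor is strictly positive (as $\mathbb{P}(\tau>T)>0$), this term is negative, and subtracting it contributes a strictly positive amount.

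The main obstacle is establishing $\mathbb{E}[W(\tau)] > 0$. One cannot simply claim $\mathbb{E}[W(\tau)]=0$ by optional stopping, because $\tau$ has infinite mean (after time $T$ the process is driftless Brownian motion, whose first passage time is heavy-tailed), so the optional stopping theorem fails at $\tau$; indeed for $\mu=0$ one has $W(\tau)=b$ identically. The fix is to apply optional stopping at the \emph{bounded} stopping time $\tau\wedge T$, which is legitimate and gives $\mathbb{E}[W(\tau\wedge T)]=0$. Comparing $W(\tau)$ with $W(\tau\wedge T)$ (they agree on $\{\tau\le T\}$ and differ by $(b-\mu T)-W(T) = b-X(T)$ on $\{\tau>T\}$) yields $\mathbb{E}[W(\tau)] = \mathbb{E}[(b-X(T))\mathbbm{1}_{\tau>T}]$. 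On $\{\tau>T\}$ the process has not yet reached the barrier, so $X(T)<b$ there, making this expectation strictly positive. Both terms of the decomposition are then positive, giving $\mathbb{E}[\frac{\tau\wedge T}{\tau}W(\tau)]>0$, hence $\widetilde\mu>\mu$, which together with $\widetilde\mu>b/T$ completes the proof.
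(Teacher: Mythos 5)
Your proof is correct, and it takes a genuinely different route from the paper's. The paper argues analytically: it first removes the intractable term $\mathbb{E}[\tfrac{T}{\tau}\mathbbm{1}_{\tau>T}]$ using the fraction inequality of Lemma \ref{lemma: 2.3}, then substitutes the closed forms \eqref{eqn: survival_prob1} and \eqref{eqn: 18} for $\mathbb{P}(\tau>T)$ and $\mathbb{E}[\tau\mathbbm{1}_{\tau\le T}]$, reducing the claim to the $\operatorname{Erfc}$ inequality \eqref{eqn: ineq}, which is settled by the monotonicity of $\varphi(x)=xe^{x^2}\operatorname{Erfc}(x)$ (Lemma \ref{lemma: 2.4}, proved via Mill's ratio bounds). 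You instead stay entirely probabilistic: the pathwise identity $W(\tau)=b-\mu(\tau\wedge T)$ turns the sign of $\widetilde{\mu}-\mu$ into the sign of $\mathbb{E}\big[\tfrac{\tau\wedge T}{\tau}W(\tau)\big]$ (your algebraic identity $(p+B)b-\mu(A+TB)=\mathbb{E}[\tfrac{\tau\wedge T}{\tau}W(\tau)]$ checks out, and all expectations are finite since $W(\tau)$ is bounded and $\tau<\infty$ a.s.\ by recurrence of the driftless motion after $T$), which you split as $\mathbb{E}[W(\tau)]-(b-\mu T)\,\mathbb{E}[\tfrac{\tau-T}{\tau}\mathbbm{1}_{\tau>T}]$. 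The first term is strictly positive by optional stopping at the \emph{bounded} time $\tau\wedge T$ — and you correctly flag that stopping at $\tau$ itself would be illegitimate ($\tau$ has infinite mean; indeed $\mathbb{E}[W(\tau)]=b\neq 0$ when $\mu=0$) — yielding $\mathbb{E}[W(\tau)]=\mathbb{E}[(b-X(T))\mathbbm{1}_{\tau>T}]>0$ since $X(T)<b$ on $\{\tau>T\}$; the subtracted term $(b-\mu T)\,\mathbb{E}[\tfrac{\tau-T}{\tau}\mathbbm{1}_{\tau>T}]$ is strictly negative exactly when $\mu>b/T$, which, given the text's bound $\widetilde{\mu}>b/T$, is the only case requiring work, so your case split is complete. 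Comparing the two approaches: yours is more elementary and more explanatory — it needs no special-function calculations, no Lemmas \ref{lemma: 2.3} and \ref{lemma: 2.4}, and no Appendix \ref{sec: deriv2}, and it isolates the mechanism of the bias (conditioning on absorption selects paths with favorable noise, forcing $\mathbb{E}[W(\tau)]>0$, which TADA then misattributes to drift); it should also generalize more readily to other piecewise-constant drift schedules where closed forms are unavailable. What the paper's computation buys is the explicit expressions \eqref{eqn: survival_prob1} and \eqref{eqn: 18}, which are reused to evaluate $\widetilde{\mu}$ numerically in Section \ref{sec: exp3.1}, and Lemma \ref{lemma: 2.4}, which is reused in Appendix \ref{sec: regularities}; your qualitative argument does not produce those byproducts.
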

To prove theorem \ref{thm: 2.2}, we need the following lemmas:
\begin{lemma}\label{lemma: 2.3}
If $0<a<A$ and $0<d<D$, then $\frac{A+d}{a+d}>\frac{A+D}{a+D}$.
\end{lemma}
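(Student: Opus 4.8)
The plan is to reduce the claimed inequality between the two fractions to a single polynomial inequality by clearing denominators. This is legitimate precisely because both denominators $a+d$ and $a+D$ are strictly positive under the hypotheses $0<a$ and $0<d<D$, so multiplying through preserves the direction of the inequality. Concretely, I would show that $\frac{A+d}{a+d} > \frac{A+D}{a+D}$ is equivalent to the cross-multiplied statement $(A+d)(a+D) > (A+D)(a+d)$, and then compare the two sides by expansion.

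First I would expand both cross-products. Each contains the common term $Aa$, so the comparison reduces to $AD + da + dD$ against $Ad + Da + Dd$. The $dD = Dd$ terms cancel, and collecting the remainder gives the difference $(A+d)(a+D) - (A+D)(a+d) = A(D-d) - a(D-d) = (A-a)(D-d)$. The conclusion is then immediate: since $A>a$ forces $A-a>0$ and $D>d$ forces $D-d>0$, the product $(A-a)(D-d)$ is strictly positive, which establishes the cross-multiplied inequality; dividing back by the positive quantity $(a+d)(a+D)$ recovers the stated claim.

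There is no substantive obstacle here, as this is a routine algebraic inequality; the only care required is verifying positivity of the denominators before cross-multiplying and keeping the bookkeeping in the expansion clean enough to recognize the factorization $(A-a)(D-d)$. For intuition one can also argue structurally: writing $\frac{A+x}{a+x} = 1 + \frac{A-a}{a+x}$ exhibits the left-hand expression as a strictly decreasing function of $x$ on $(0,\infty)$ whenever $A>a$, so evaluating at $x=d$ versus $x=D>d$ yields the strict inequality directly. I would present the cross-multiplication computation as the formal proof and, if desired, note the monotonicity viewpoint as the reason the inequality holds. This lemma will then feed into the proof of Theorem~\ref{thm: 2.2} by comparing ratios of the form appearing in the limit $\widetilde{\mu}$ in Eq.~\eqref{eqn: 17}.
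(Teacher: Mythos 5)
Your proof is correct: the cross-multiplication identity $(A+d)(a+D)-(A+D)(a+d)=(A-a)(D-d)>0$ is exactly the routine argument the paper has in mind, since the paper itself omits the proof with the remark that it is straightforward. Both your formal computation and your monotonicity observation (that $x\mapsto 1+\frac{A-a}{a+x}$ is strictly decreasing for $A>a$) are valid, and nothing further is needed.
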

The proof of lemma \ref{lemma: 2.3} is straightforward.
\begin{lemma}\label{lemma: 2.4}
$\varphi(x)\triangleq xe^{x^2}\operatorname{Erfc}(x)$ is strictly increasing on $\mathbb{R}$.
\end{lemma}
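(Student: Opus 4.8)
The plan is to prove strict positivity of $\varphi'$ everywhere on $\mathbb{R}$ via a clean integral representation of the Mills ratio, which avoids the case analysis that a naive differentiation would force. First I would rewrite $\varphi$ using the substitution $z = x+s$ in the defining integral of $\operatorname{Erfc}$: since $\operatorname{Erfc}(x)=\frac{2}{\sqrt{\pi}}\int_x^\infty e^{-z^2}\diff z$, this gives
\begin{equation*}
e^{x^2}\operatorname{Erfc}(x)=\frac{2}{\sqrt{\pi}}\int_0^\infty e^{-2xs-s^2}\diff s,
\qquad\text{hence}\qquad
\varphi(x)=\frac{2}{\sqrt{\pi}}\int_0^\infty x\,e^{-2xs-s^2}\diff s.
\end{equation*}
The Gaussian factor $e^{-s^2}$ ensures the integral converges for every real $x$ (including $x<0$, where $e^{-2xs}$ grows but is dominated), so this representation holds on all of $\mathbb{R}$.

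Next I would differentiate under the integral sign. On any bounded interval of $x$-values the integrand $x e^{-2xs-s^2}$ and its $x$-derivative $(1-2xs)e^{-2xs-s^2}$ are dominated by an integrable function of $s$ (a constant times $(1+s)e^{C s - s^2}$), which justifies interchange by the standard dominated-convergence criterion. This yields
\begin{equation*}
\varphi'(x)=\frac{2}{\sqrt{\pi}}\int_0^\infty (1-2xs)\,e^{-2xs-s^2}\diff s.
\end{equation*}
The key step is then to recognize a telescoping identity: since $\frac{\diff}{\diff s}\big[s\,e^{-2xs-s^2}\big]=(1-2xs-2s^2)\,e^{-2xs-s^2}$ and the boundary term $s\,e^{-2xs-s^2}$ vanishes at $s=0$ and as $s\to\infty$ (again by Gaussian domination, for every real $x$), integrating gives $\int_0^\infty (1-2xs-2s^2)\,e^{-2xs-s^2}\diff s=0$. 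Rearranging,
\begin{equation*}
\varphi'(x)=\frac{2}{\sqrt{\pi}}\int_0^\infty 2s^2\,e^{-2xs-s^2}\diff s>0,
\end{equation*}
where strict positivity is immediate because the integrand is strictly positive on $(0,\infty)$. This holds for all $x\in\mathbb{R}$, establishing that $\varphi$ is strictly increasing.

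I expect the only genuine obstacle to be bookkeeping around the two analytic justifications that must be uniform in the sign of $x$: (i) the legitimacy of differentiating under the integral sign, and (ii) the vanishing of the boundary term in the integration that produces the telescoping identity. Both reduce to the observation that the $e^{-s^2}$ factor dominates the linear-in-$s$ exponent $e^{-2xs}$ for large $s$ regardless of the sign of $x$, so a locally uniform integrable dominating function exists; I would state this domination explicitly rather than grind through it. An alternative route is to compute $\varphi'(x)=(1+2x^2)e^{x^2}\operatorname{Erfc}(x)-\tfrac{2x}{\sqrt{\pi}}$ directly and argue positivity by cases (trivial for $x\le 0$, requiring a lower Mills-ratio bound for $x>0$), but the integral-representation argument above is preferable since it delivers the result uniformly in a single line.
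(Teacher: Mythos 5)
Your proof is correct, and it takes a genuinely different route from the paper's. The paper differentiates directly to get $\varphi'(x)=e^{x^2}(2x^2+1)\operatorname{Erfc}(x)-\tfrac{2}{\sqrt{\pi}}x$, then invokes the cited Mills-ratio lower bound $e^{x^2}\operatorname{Erfc}(x)>\tfrac{2}{\sqrt{\pi}}\tfrac{1}{x+\sqrt{x^2+2}}$ (NIST 7.8.2), and finishes with a case split on the sign of $x$ together with a rationalization trick for $x>0$. You instead start from the Laplace-type representation $e^{x^2}\operatorname{Erfc}(x)=\tfrac{2}{\sqrt{\pi}}\int_0^\infty e^{-2xs-s^2}\diff s$, differentiate under the integral, and exploit the exact identity $\int_0^\infty(1-2xs-2s^2)e^{-2xs-s^2}\diff s=0$ (integration of a perfect $s$-derivative with vanishing boundary terms) to arrive at the manifestly positive expression $\varphi'(x)=\tfrac{4}{\sqrt{\pi}}\int_0^\infty s^2e^{-2xs-s^2}\diff s$; one can check this agrees exactly with the paper's closed form for $\varphi'$. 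Your version buys self-containedness (no handbook inequality to cite), uniformity in $x$ (no case analysis), and an exact formula for $\varphi'$ rather than a lower bound; the price is the analytic bookkeeping for differentiating under the integral sign and for the boundary term, both of which you justify correctly by the locally uniform Gaussian domination coming from the $e^{-s^2}$ factor. The paper's argument is shorter on the page, but only because its hard content is outsourced to the cited Mills-ratio bound, whereas yours would stand alone in the appendix.
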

\begin{proof}[\proofname \text{ of Lemma \ref{lemma: 2.4}}]
$\varphi'(x)=e^{x^2}(2x^2+1)\operatorname{Erfc}(x)-\frac{2}{\sqrt{\pi}}x$. By the bounds on Mill's ratio $e^{x^2}\operatorname{Erfc}(x)>\frac{2}{\sqrt{\pi}}\frac{1}{x+\sqrt{x^2+2}}$ (see \cite{olver2010nist} 7.8.2), we have
\begin{equation*}
\begin{aligned}
\varphi'(x)&>\frac{2}{\sqrt{\pi}}\frac{2x^2+1}{x+\sqrt{x^2+2}}-\frac{2}{\sqrt{\pi}}x=\frac{2}{\sqrt{\pi}}\frac{x^2+1-x\sqrt{x^2+2}}{x+\sqrt{x^2+2}}\\
\end{aligned}
\end{equation*}
This is obviously $>0$ when $x\le0$. When $x>0$ we can continue to write
\begin{equation*}
\begin{aligned}
\varphi'(x)&>\frac{2}{\sqrt{\pi}}\frac{(x^2+1)^2-(x\sqrt{x^2+2})^2}{(x+\sqrt{x^2+2})(x^2+1+x\sqrt{x^2+2})}\\
&=\frac{2}{\sqrt{\pi}}\frac{1}{(x+\sqrt{x^2+2})(x^2+1+x\sqrt{x^2+2})}>0
\end{aligned}
\end{equation*}
So $\varphi(x)$ is strictly increasing on $\mathbb{R}$.
\end{proof}
\begin{proof}[\proofname \text{ of Theorem \ref{thm: 2.2}}]
Notice that $\mathbb{E}[\frac{T}{\tau}\mathbbm{1}_{\tau> T}]<\mathbb{E}[\mathbbm{1}_{\tau> T}]=\mathbb{P}(\tau>T)$
so by lemma \ref{lemma: 2.3} we have
\begin{equation*}
\begin{aligned}
 \widetilde{\mu}
=&\frac{b}{T}\frac{\mathbb{P}(\tau\le T)T+\mathbb{E}[\frac{T}{\tau}\mathbbm{1}_{\tau> T}]T}{\mathbb{E}[\tau\mathbbm{1}_{\tau\le T}]+\mathbb{E}[\frac{T}{\tau}\mathbbm{1}_{\tau> T}]T}\\
>&\frac{b}{T}\frac{\mathbb{P}(\tau\le T)T+\mathbb{P}(\tau>T)T}{\mathbb{E}[\tau\mathbbm{1}_{\tau\le T}]+\mathbb{P}(\tau>T)T}\\
=&\frac{b}{\mathbb{E}[\tau\mathbbm{1}_{\tau\le T}]+\mathbb{P}(\tau>T)T}\\ 
\end{aligned}
\end{equation*}
$\mathbb{P}(\tau> T)$ is given by Eq.\eqref{eqn: survival_prob1}. We can also compute that
\begin{equation}\label{eqn: 18}
\begin{aligned}
\mathbb{E}[\tau\mathbbm{1}_{\tau\le T}]=&\int_0^T \frac{bt}{\sqrt{2\pi t^3}}e^{-\frac{(b-\mu t)^2}{2t}} \diff t\\
=&\frac{b}{\mu}\left(1-\frac{1}{2}\operatorname{Erfc}\Big(\frac{T \mu-b}{\sqrt{2T}}\Big)-\frac{1}{2}e^{2 b \mu} \operatorname{Erfc}\Big(\frac{T \mu+b}{\sqrt{2T}}\Big)\right)\\
\end{aligned}
\end{equation}
(see details of this derivation in \ref{sec: deriv2}). So to prove that $\widetilde{\mu}>\mu$, it suffices to prove
\begin{equation*}
\tfrac{b}{\frac{b}{\mu}\left(1-\frac{1}{2}\operatorname{Erfc}\left(\frac{T \mu-b}{\sqrt{2T}}\right)-\frac{1}{2}e^{2 b \mu} \operatorname{Erfc}\left(\frac{T \mu+b}{\sqrt{2T}}\right)\right)+T\left(\frac{1}{2}\operatorname{Erfc}\left(\frac{T \mu-b}{\sqrt{2T}}\right)-\frac{1}{2}e^{2 b \mu} \operatorname{Erfc}\left(\frac{T \mu+b}{\sqrt{2T}}\right) \right)}>\mu\\ 
\end{equation*}
which further simplifies to
\begin{equation}\label{eqn: ineq}
(T\mu+b)e^{2 b \mu} \operatorname{Erfc}\Big(\frac{T \mu+b}{\sqrt{2T}}\Big)>(T\mu-b)\operatorname{Erfc}\Big(\frac{T \mu-b}{\sqrt{2T}}\Big).
\end{equation}
Note that this can be rewritten as $\varphi\Big(\frac{T\mu+b}{\sqrt{2T}}\Big)>\varphi\Big(\frac{T\mu-b}{\sqrt{2T}}\Big)$ where $\varphi(x)=xe^{x^2}\operatorname{Erfc}(x)$, so by Lemma \ref{lemma: 2.4}, the inequality \eqref{eqn: ineq} holds.
\end{proof}
% It suffices to prove \eqref{eqn: ineq} for $T\mu-b>0$. Let $\varphi(\mu,b, T)=(T\mu+b)e^{\mu b} \operatorname{Erfc}\Big(\frac{T \mu+b}{\sqrt{2T}}\Big)-(T\mu-b)e^{-\mu b} \operatorname{Erfc}\Big(\frac{T \mu-b}{\sqrt{2T}}\Big)$, we can compute
% \begin{equation*}
% \frac{\partial}{\partial T}\varphi(\mu,b, T)=e^{b \mu} \mu \operatorname{Erfc}\Big(\frac{T \mu+b}{\sqrt{2T}}\Big)-e^{-b \mu} \mu \operatorname{Erfc}\Big(\frac{T \mu-b}{\sqrt{2T}}\Big)
% \end{equation*}
% Let $g(\mu, b, T)=e^{-b \mu}  \operatorname{Erfc}\Big(\frac{T \mu-b}{\sqrt{2T}}\Big)-e^{b \mu}  \operatorname{Erfc}\Big(\frac{T \mu+b}{\sqrt{2T}}\Big)$ so that $\frac{\partial}{\partial T}\varphi(\mu,b, T)=-\mu g(\mu, b, T)$, since
% \begin{equation*}
% \frac{\partial}{\partial T}g(\mu,b, T)=-\sqrt{\frac{2}{\pi}}\frac{b e^{-\frac{b^2+T^2 \mu^2}{2 T}}}{T^{3 / 2}}<0
% \end{equation*}
% holds for all $\mu,b, T>0$, we know that $g$ is strictly decreasing w.r.t. $T$ and $g(\mu,b, T)>g(\mu, b, \infty)=0$, hence $\varphi$ is strictly decreasing w.r.t. $T$ for all $\mu,b>0$, so $\varphi(\mu, b, T)>\varphi(\mu, b, \infty)=0$.\qed

\section{Numerical Examples}
In this section we provide some numerical examples to further demonstrate the inconsistency of $\widehat{\mu}^{\text{TADA}}_n$.

\subsection{DDM with One-sided Boundary}\label{sec: exp3.1}
We first conduct a numerical illustration  of our proposed counterexample \eqref{eqn: model1} in Section \ref{sec: example}. We simulate an i.i.d. dataset of first passage times $(\tau_i)_{i=1}^n$ where $n=10, 000$, using Euler-Maruyama discretization of \eqref{eqn: model1} with a sufficiently small time step. With $b=1$ and $T=0.5$, we aim to infer $\mu$ from the dataset $(\tau_i)_{i=1}^n$ .

We compute both the TADA ``maximum likelihood'' estimator $\widehat{\mu}_n^\text{TADA}$ \eqref{eqn: glam_mle} and the generic maximum likelihood estimator $\widehat{\mu}_n^\text{ML}$ \eqref{eqn: addm_mle}. We then compare them with the true $\mu$ and $\widetilde{\mu}$ \eqref{eqn: 17}, the derived theoretical large-sample limit of $\widehat{\mu}_n^\text{TADA}$\footnote{$\mathbb{P}(\tau\le T)$ and $\mathbb{E}[\tau \mathbbm{1}_{\tau\le T}]$ are given by Eq.\eqref{eqn: survival_prob1} and Eq.\eqref{eqn: 18} respectively. $\mathbb{E}[\frac{T}{\tau}\mathbbm{1}_{\tau > T}]$ can be calculated with numerical integration accurately.}. The results are summarized in Table \ref{tab: example1}.
\begin{table}[htb!]
    \centering
    \begin{tabular}{cccc}
    \toprule
         $\mu$ & $\widehat{\mu}^\text{ML}_n$ & $\widetilde{\mu}$ &$\widehat{\mu}^{\text{TADA}}_n$ \\\midrule
         1& 0.996& 2.613&2.606\\
\bottomrule
    \end{tabular}
    \caption{Values of the true $\mu$, the maximum likelihood estimator $\widehat{\mu}_n^\text{ML}$ \eqref{eqn: addm_mle}, the TADA ``maximum likelihood'' estimator $\widehat{\mu}^\text{TADA}_n$ \eqref{eqn: glam_mle} and its limit $\widetilde{\mu}$ \eqref{eqn: 17}.}
    %$b$ and $T$ in \eqref{eqn: model1} are both set to be $1$.}
    \label{tab: example1}
\end{table}

The results demonstrate that $\widehat{\mu}_n^{\text{TADA}}$ converges to $\widetilde{\mu}$ rather than the true $\mu$ as $n \rightarrow \infty$, confirming its inconsistency. In contrast, the maximum likelihood estimator $\widehat{\mu}_n^{\mathrm{ML}}$ correctly converges to $\mu$.

\subsection{DDM with Double-sided Boundaries}\label{sec: exp3.2}

In this section we conduct a parameter recovery study for a DDM with one upper boundary and one lower boundary. The model is defined as
\begin{equation}\label{eqn: ddm_exp2}
\begin{aligned}
    &\diff X(t)=A(V(t)) \diff t+\sigma \diff W(t) ,\\
    \text{where }&A(v)\triangleq\begin{cases}
\mu_1& \text{if }v=\mathrm{A} ,\\
\mu_2& \text{if }v=\mathrm{B} .\\
\end{cases} 
\end{aligned}
\end{equation}
The drift function $\mu(t)=A(V(t))$ is a piecewise constant function alternating between $\mu_1$ and $\mu_2$ depending on the fixation trajectory $V(t)$. For each trial $i$, the fixation trajectory $V_i(t)$ was sampled from a gamma process to mimic the switching of attention in psychological experiments. Compared to the previous example in Section \ref{sec: exp3.1}, this model includes two absorbing boundaries and incorporates an attentional drift mechanism more closely aligned with the aDDM \eqref{eqn: addm}, while remaining a simplified variant that excludes option-value covariates.

We simulate a collection of $n=10,000$ visual fixation trajectories $(V_i)_{i=1}^n$ and the corresponding dataset $((\tau_i, C_i))_{i=1}^n$ from a prescribed DDM model \eqref{eqn: ddm_exp2} with alternating drifts $\mu_1, \mu_2$, constant diffusion $\sigma=1$ and symmetric constant boundaries $u=-\ell=1.5$ starting at $x_0=-0.2$. 
The average number of fixations per trial in our simulation was 3.806.

 TADA follows the same approach as in the one-boundary case: for each trial it replaces the time-varying drift rate $\mu(t)$ of the DDM with the constant drift rate $\tau^{-1}\int_0^\tau \mu(t)\diff t$. This constant drift is further used to evaluate the formulas for first passage time densities on the upper and lower boundaries of a standard DDM \eqref{eqn: ddm} with two boundaries, which are available as infinite series \citep{navarro2009fast, gondan2014even, cox2017theory}. This allows us to compute the estimator $\widehat{\mu}_{1, n}^\text{TADA}$ and $\widehat{\mu}_{2, n}^\text{TADA}$ from the simulated data. The computations of the generic MLEs $\widehat{\mu}_{1,n}^{\text{ML}}$ and $\widehat{\mu}_{2,n}^{\text{ML}}$, are however more involved, as the first passage time density of the DDM \eqref{eqn: ddm_exp2} is analytically intractable. In our parallel work \citep{liu2025efficient}, we develop an algorithm \textsc{efpt} for the fast and accurate computation of first passage time densities in DDMs with ``multi-stage'' structures, such as those with piecewise constant drift functions like \eqref{eqn: ddm_exp2} and \eqref{eqn: addm}. We use this algorithm to evaluate the likelihood and obtain the MLEs. The estimation results are summarized in Table \ref{tab: example2}.
\begin{table}[htb!]
    \centering
    \begin{tabular}{cccccc}
    \toprule
         $\mu_1$ & $\widehat{\mu}_{1, n}^\text{ML}$ &$\widehat{\mu}_{1,n}^{\text{TADA}}$ 
         &$\mu_2$ & $\widehat{\mu}_{2, n}^\text{ML}$ &$\widehat{\mu}_{2,n}^{\text{TADA}}$ \\\midrule
         $1$& $1.003$ & $1.684$ & $-0.8$& $-0.796$& $-1.478$\\
\bottomrule
    \end{tabular}
    \caption{Values of the true $\mu_1, \mu_2$, the maximum likelihood estimators $\widehat{\mu}_{1, n}^\text{ML}, \widehat{\mu}_{2, n}^\text{ML}$, the TADA ``maximum likelihood'' estimators $\widehat{\mu}^\text{TADA}_{1, n}, \widehat{\mu}^\text{TADA}_{2,n }$.} 
    %The boundaries are set to $u=-\ell=1.5$, and $x_0$ is set to $-0.2.$}
    \label{tab: example2}
\end{table}

From the table, we conclude again that $\widehat{\mu}_{1, n}^\text{ML}$ and $\widehat{\mu}_{2, n}^\text{ML}$ accurately estimate the true values of $\mu_1$ and $\mu_2$ respectively, while $\widehat{\mu}^\text{TADA}_{1, n}$ and $\widehat{\mu}^\text{TADA}_{2,n }$ do not. Similarly to the results in Example \ref{sec: exp3.1}, $\widehat{\mu}^\text{TADA}_{1, n}$ and $\widehat{\mu}^\text{TADA}_{2,n }$ overestimates the amplitude of the drifts. However, the theoretical large sample limits of $\widehat{\mu}^\text{TADA}_{1, n}$ and $\widehat{\mu}^\text{TADA}_{2,n }$ are unknown in this case.

\subsection{Attentional DDM}
We conclude this section by studying the consequences of using TADA to perform statistical inference of aDDM \eqref{eqn: addm} parameters. 

\subsubsection{Statistical Inconsistency in Parameter Recovery}

We first show the inconsistency in aDDM parameter estimation, as in the previous two subsections. Similar to the setup in Example~\ref{sec: exp3.2}, we simulate $n=10,000$ visual fixation trajectories $(V_i(t))_{i=1}^n$ from a gamma process. The corresponding option ratings $((r_{A, i}, r_{B, i}))_{i=1}^n$ are drawn from a discrete uniform distribution over $\{1,2,3,4,5\}$, and the associated decision data $((\tau_i, C_i))_{i=1}^n$ are generated accordingly. In this example, we set the boundaries to be symmetric ($u=-\ell$) and aim to estimate all four free parameters from data, namely $\eta, \kappa, u$ and $x_0$. The likelihood is then evaluated using both TADA (as described in Section~\ref{sec: glam}) and \textsc{efpt} \citep{liu2025efficient}, from which the corresponding maximum likelihood estimators are obtained.

The true parameters are set to $\kappa=0.5, u=2$, and $x_0=0.5$. Since $\eta$ is the primary parameter of interest for testing the aDDM, we vary its true value from 0 to 1 in increments of 0.05 to examine the estimation error across different levels of $\eta$. The relations of the true $\eta$, the TADA ``maximum likelihood'' estimator $\widehat{\eta}_n^{\text{TADA}}$, and the MLE $\widehat{\eta}_n^{\text{ML}}$ are shown in Figure \ref{fig: addm_param_recovery}.
\begin{figure}[htbp]
  \centering
\begin{minipage}{0.49\textwidth}
  \includegraphics[width=\linewidth]{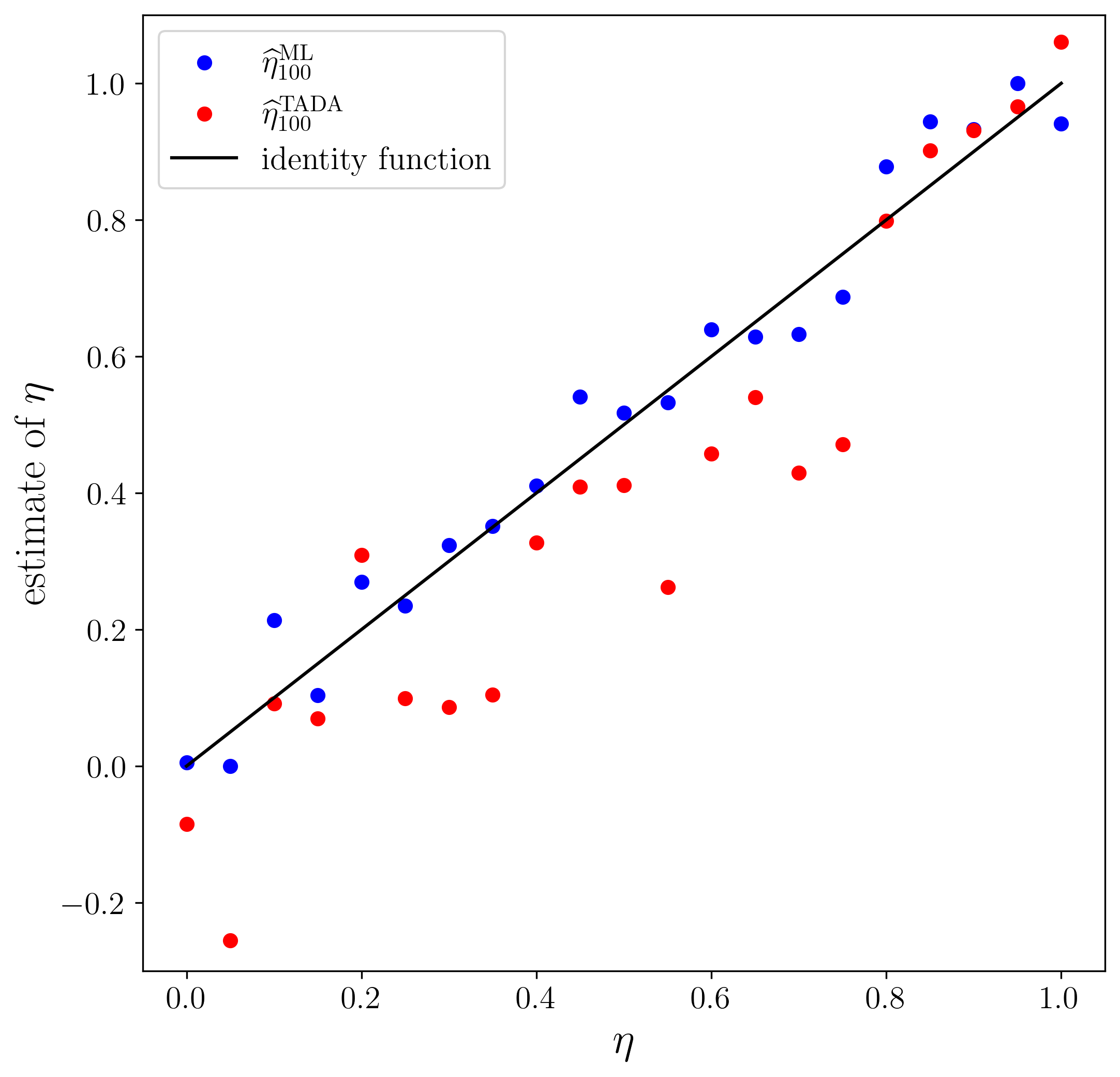}
	\end{minipage}
\begin{minipage}{0.49\textwidth}
  \includegraphics[width=\linewidth]{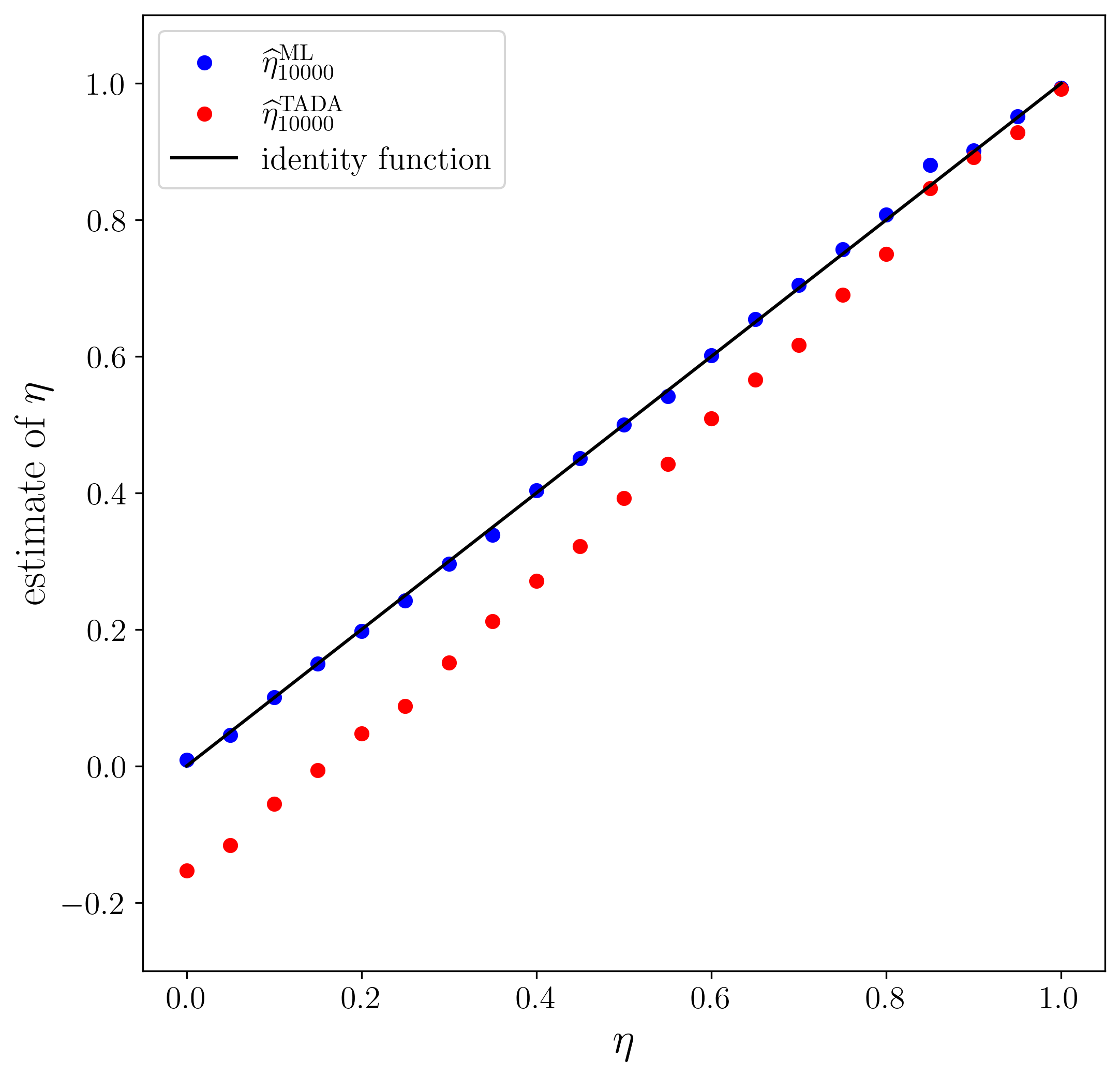}
	\end{minipage}
  \caption{Plots of the values of estimator $\widehat{\eta}_n^{\text{TADA}}$ (red points) and $\widehat{\eta}_n^{\text{ML}}$ (blue points) against the true value $\eta$ with $n=100$ (left subplot) and $n=10,000$ (right subplot). The black line is the plot of the identity function $y=x$ which serves as the reference for consistent estimation. $\eta=0$ means maximal attentional effect that the decision-maker completely ignores the non-fixated item in evidence accumulation, whereas $\eta=1$ corresponds to no attentional effect, in which case the model reduces to the standard DDM.}
  \label{fig: addm_param_recovery}
\end{figure}

From Figure \ref{fig: addm_param_recovery}, we can see that the $\widehat{\eta}_n^{\text{ML}}$ aligns closely with the identity line $y=x$, even for small sample size ($n=100$), and becomes nearly exact when $n=10,000$. This indicates the consistency of $\widehat{\eta}_n^{\text{ML}}$: as the sample size grows, the estimator converges to the true parameter. However, $\widehat{\eta}_n^{\text{TADA}}$ systematically underestimates $\eta$. Even when $n=10,000$, the red points remain well below the identity line, showing persistent asymptotic bias that does not vanish with larger $n$. This confirms that the TADA ``maximum likelihood'' estimator is inconsistent. It is worth noting that $\eta=1$ corresponds to the case with no attentional effect, where the model reduces to the standard DDM. In this case, the TADA ``likelihood'' coincides with the analytical likelihood solution of the standard DDM, and the bias therefore vanishes. As $\eta$ decreases toward 0 (indicating stronger attentional effects), the bias $\widehat{\eta}_n^{\text{TADA}}-\eta$ grows larger. Indeed, the original aDDM study \citep{krajbich2010visual}, which fitted the model via expensive forward simulation, reported a maximum-likelihood estimate of $0.3$, a value in the range where TADA exhibits substantial bias. When $\eta$ is close to 0 (approximately $< 0.2), \widehat{\eta}_n^{\text{TADA}}$ even takes negative values, which are not physically meaningful\footnote{The negative values of \(\widehat{\eta}_n^{\mathrm{TADA}}\) arise partly because we use unconstrained numerical optimization in this experiment. If one instead imposes the natural constraint \(\eta\in[0,1]\), then estimates that would otherwise be negative are truncated at the boundary \(\eta=0\). This keeps the estimates within the admissible parameter space, but it does not resolve the underlying inconsistency of TADA.}. Indeed, a recent study comparing TADA fits to multiple datasets across humans and monkeys often reported individuals with $\eta < 0$ \citep{lupkin2023monkeys}.

In this specific example, the TADA estimator is strongly correlated with the true parameter, perhaps suggesting that it can still be useful for qualitatively measuring attentional effects in an aDDM. Indeed, our results need not imply that all previous reported qualitative conclusions about attentional effects based on TADA are invalid. However, in the next section, we show that it is easy to construct realistic circumstances in which TADA can lead to erroneous qualitative scientific conclusions about which of two populations exhibits a stronger attentional effect, providing further caution against its use. The problem is that the magnitude of the bias exhibited in Figure \ref{fig: addm_param_recovery} is affected by the other parameters defining the aDDM. If these parameters differ across populations, then differences in TADA estimates of attention can be driven by changes in these other parameters, not by changes in attention.

% We note that, in this specific example, if researchers are primarily interested in qualitative conclusions of the aDDM rather than quantitative parameter estimation, the use of TADA may still yield meaningful insights. Scientifically, Figure \ref{fig: addm_param_recovery} shows that, at least in this specific example, while TADA underestimates the magnitude of the attentional parameter $\eta$, it nonetheless preserves the monotonic relationship between the estimated and true values. This means that TADA still captures the direction and relative strength of attentional effects: trials with stronger true attentional modulation (smaller $\eta$) still produce smaller estimated values, although this monotonic relationship may only become evident with a large dataset (e.g., $n=10,000$). However, we caution that this monotonicity relationship is specific to the present simulation and may attenuate or even fail under different task structures, model variants, or parameter regimes, so even the qualitative conclusions based on TADA may not hold in other settings. Indeed, in the next part, we show that hypothesis testing based on TADA can fail in a much more fundamental way: it could lead to incorrect scientific conclusions.

\subsubsection{Scientific Hypothesis Testing}

Let \RomanNumeralCaps{1} and \RomanNumeralCaps{2} denote two populations, and let \(\eta_\RomanNumeralCaps{1}\) and \(\eta_\RomanNumeralCaps{2}\) denote their respective attentional modulation parameters. A natural scientific question is whether the two populations differ in how attention shapes their decisions, or whether attentional modulation is stronger in one population than in the other. This question can be formalized by comparing the attentional modulation parameters \(\eta_\RomanNumeralCaps{1}\) and \(\eta_\RomanNumeralCaps{2}\) using decision data collected from the two populations. The inferential target is the contrast
\begin{equation*}
\Delta \eta \triangleq \eta_\RomanNumeralCaps{1}-\eta_\RomanNumeralCaps{2} .
\end{equation*}

We consider two plug-in estimators of this contrast:
\begin{equation*}
\begin{aligned}
\widehat{\Delta\eta}^{\mathrm{ML}}
&\triangleq
\widehat{\eta}_\RomanNumeralCaps{1}^{\mathrm{ML}}
-
\widehat{\eta}_\RomanNumeralCaps{2}^{\mathrm{ML}},\\
\widehat{\Delta\eta}^{\mathrm{TADA}}
&\triangleq
\widehat{\eta}_\RomanNumeralCaps{1}^{\mathrm{TADA}}
-
\widehat{\eta}_\RomanNumeralCaps{2}^{\mathrm{TADA}},
\end{aligned}
\end{equation*}
where \(\widehat{\eta}_\RomanNumeralCaps{1}^{\mathrm{ML}}\) and \(\widehat{\eta}_\RomanNumeralCaps{2}^{\mathrm{ML}}\) are obtained by maximizing the full aDDM likelihood (computed by \textsc{efpt} \cite{liu2025efficient}) separately for the two populations, while \(\widehat{\eta}_\RomanNumeralCaps{1}^{\mathrm{TADA}}\) and \(\widehat{\eta}_\RomanNumeralCaps{2}^{\mathrm{TADA}}\) are the corresponding estimates obtained under the TADA approximation.

Because the two samples come from different populations, it is important not to assume that the remaining model components are identical. The nuisance parameters, such as drift scale $\kappa$, boundary separation $a$, and starting point $x_0$, may differ across populations. Moreover, the covariates themselves, such as item ratings and fixation patterns, may follow different distributions in the two populations. 

We consider two simulation settings.
\begin{enumerate}
    \item Populations \RomanNumeralCaps{1} and \RomanNumeralCaps{2} share the same nuisance parameters \(\kappa=0.5\), \(a=2\), and \(x_0=0\), and their fixation times are sampled from the same gamma distribution with shape 4 and rate 10. Their item ratings differ: population \RomanNumeralCaps{1} uses \(\mathrm{Uniform}(\{1,\ldots,9\})\), while population \RomanNumeralCaps{2} uses \(\mathrm{Uniform}(\{4,5,6\})\). These distributions have the same mean 5 but different variability. This setting captures a population difference in the dispersion of subjective item values.
    
    \item Populations \RomanNumeralCaps{1} and \RomanNumeralCaps{2} share the same parameters \(\kappa=0.5\) and \(x_0=0\). Their fixation times are sampled from the same gamma distribution with shape 4 and rate 10, and their item ratings are both sampled from \(\mathrm{Uniform}(\{1,2,3\})\). However, the boundary parameters differ: \(a_{\RomanNumeralCaps{1}}=1.2\) and \(a_{\RomanNumeralCaps{2}}=3.5\). The second setting captures a population difference in response caution.
\end{enumerate}
For both settings, we set \(\eta_{\RomanNumeralCaps{1}}=0.3\) and estimate \(\Delta\eta\) as the true \(\Delta\eta\) varies from $-0.1$ to $0.1$. We plot the estimators $\widehat{\Delta\eta}$ from $n=10,000$ number of data versus the true $\Delta\eta$ in Figure \ref{fig: testing}.

\begin{figure}[htbp]
  \centering
\begin{minipage}{0.49\textwidth}
  \includegraphics[width=\linewidth]{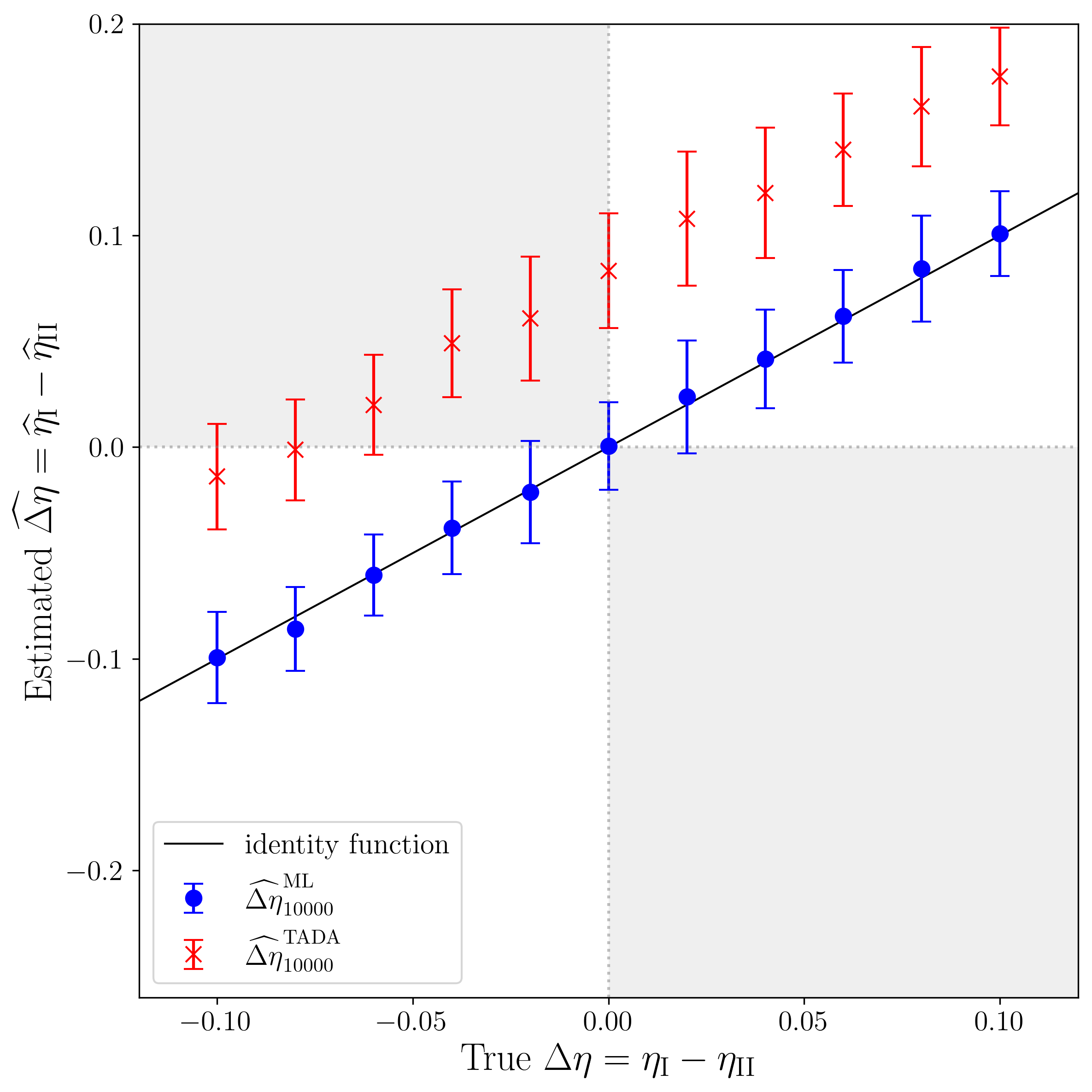}
	\end{minipage}
\begin{minipage}{0.49\textwidth}
  \includegraphics[width=\linewidth]{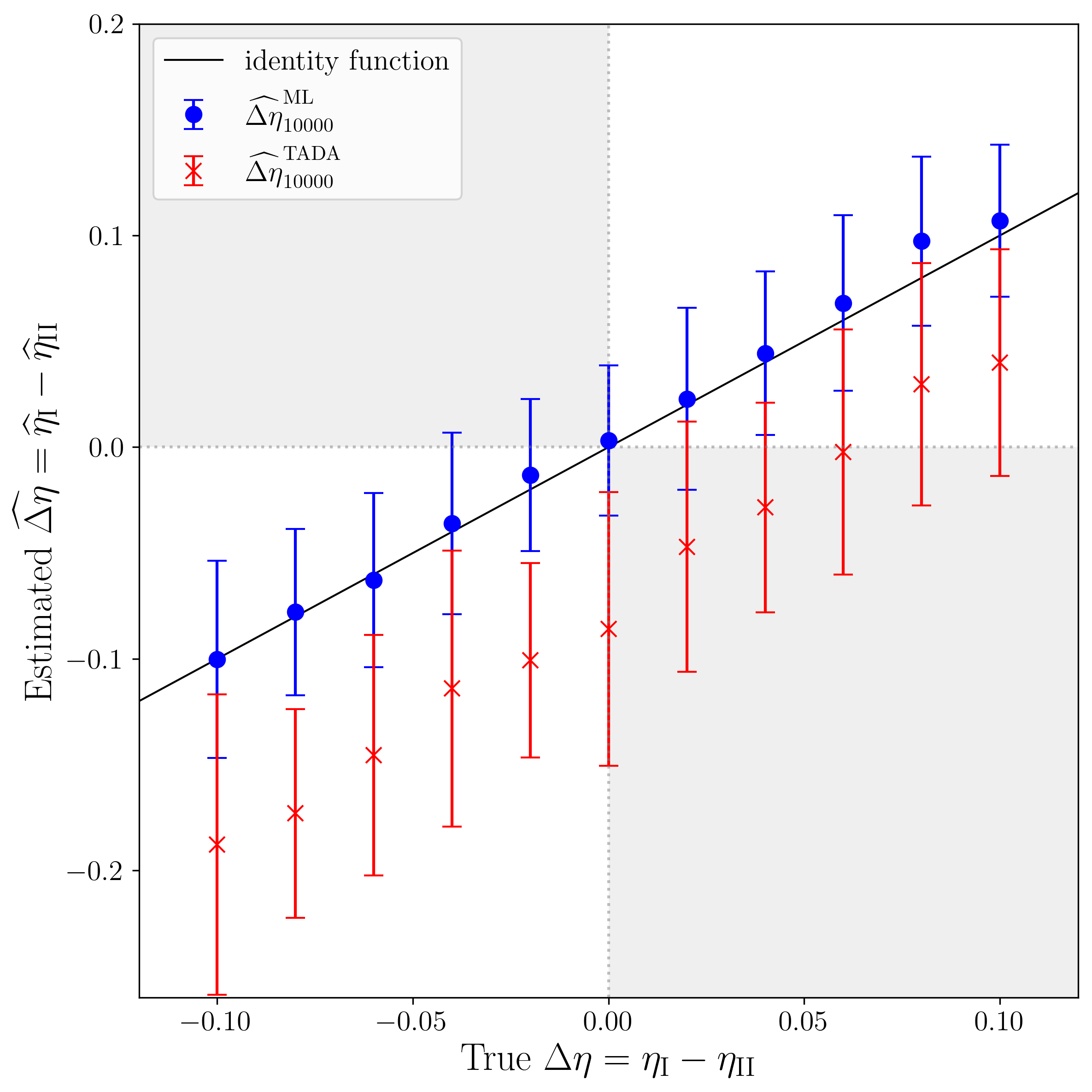}
	\end{minipage}
  \caption{Plots of the values of estimator $\widehat{\Delta\eta}_n^{\text{TADA}}$ (red points) and $\widehat{\Delta\eta}_n^{\text{ML}}$ (blue points) against the true value $\eta$ with $n=10000$ under two different simulation settings. In the left panel, populations \RomanNumeralCaps{1} and \RomanNumeralCaps{2} differ in their item-rating distributions. In the right panel, they differ in their boundary parameters. The black line is the plot of the identity function $y=x$ which serves as the reference for consistent estimation. Error bars represent the standard deviations across 50 replicate random simulations. The second and fourth quadrants are shaded, indicating regions where the estimated $\Delta \eta$ has the opposite sign from the true $\Delta \eta$. Thus, any estimate falling in these shaded regions corresponds to an incorrect qualitative conclusion about which group has the larger attentional effect.}
  \label{fig: testing}
\end{figure}

Figure \ref{fig: testing} shows that, unsurprisingly, $\widehat{\Delta \eta}^{\mathrm{TADA}}$ is inconsistent in both settings. More concerningly, the bias can be large enough to reverse the sign of the estimated contrast relative to the true $\Delta \eta$. For example, in the left panel, when $\Delta \eta$ is approximately in $(-0.08,0), \widehat{\Delta \eta}_{10000}^{\text{TADA}}$ is positive. In the right panel, when $\Delta \eta$ is approximately in $(0,0.06), \widehat{\Delta \eta}_{10000}^{\text{TADA}}$ mostly takes negative values. Thus, TADA may not merely distort the magnitude of the estimated attentional effect; it can also lead to qualitatively incorrect conclusions about which population exhibits stronger attentional modulation.

From the results in Figure \ref{fig: testing}, we can formally construct hypothesis testing examples to showcase how TADA inconsistency can mislead scientific conclusions. Specifically,
\begin{enumerate}
    \item Under simulation setting 1, we take \(\Delta\eta=-0.05\), hence \(\eta_{\RomanNumeralCaps{1}}=0.30\) and \(\eta_{\RomanNumeralCaps{2}}=0.35\). We conduct the following one-sided likelihood ratio test (LRT)\footnote{As discussed earlier in Section \ref{sec: non-equivalence}, the TADA FPTD does not yield a valid likelihood function, and the likelihood ratio test based on the TADA likelihood should therefore only be understood in a formal sense.}:
    \begin{equation*}
        H_0: \eta_{\RomanNumeralCaps{1}} \le \eta_{\RomanNumeralCaps{2}}
        \quad\text{vs.}\quad
        H_1: \eta_{\RomanNumeralCaps{1}} > \eta_{\RomanNumeralCaps{2}},
    \end{equation*}
    at significance level \(\alpha=0.05\). In this setting, \(H_0\) is true.

    We repeat the simulation 200 times. The LRT based on the aDDM likelihood correctly does not reject \(H_0\) in any replication, resulting in no Type \RomanNumeralCaps{1} errors. In contrast, the LRT based on the TADA likelihood incorrectly rejects \(H_0\) 69 times, corresponding to an empirical Type \RomanNumeralCaps{1} error rate of \(69/200=0.345\).

    \item Under simulation setting 2, we take \(\Delta\eta=0.05\), hence \(\eta_{\RomanNumeralCaps{1}}=0.30\) and \(\eta_{\RomanNumeralCaps{2}}=0.25\). We conduct the same one-sided LRT at significance level \(\alpha=0.05\). In this setting, \(H_0\) is false.

    Again, we repeat the simulation 200 times. The LRT based on the aDDM likelihood rejects \(H_0\) 72 times, corresponding to an empirical power of \(72/200=0.36\). In contrast, the LRT based on the TADA likelihood rejects \(H_0\) only 2 times, corresponding to an empirical power of \(2/200=0.01\).
\end{enumerate}

The reason behind these results is simple: the non-zero asymptotic bias for TADA ``maximum likelihood'' estimator varies with nuisance parameter and covariate values. Therefore, when comparing two populations with different nuisance parameters or different covariate distributions, the TADA-induced biases in the two populations need not cancel, and can distort the relative size. Thus, TADA can fail not only quantitatively, by producing biased parameter estimates, but also qualitatively, by leading to incorrect scientific conclusions about the direction of population differences, especially when these differences are subtle, as may often be the case.

These are only two representative examples of the consequences of TADA inconsistency; more failures should be expected in other settings. In contrast, the generic maximum likelihood estimator is consistent, so its asymptotic bias is always zero, and the estimated values closely track the identity line across the range of $\Delta\eta$. It fits naturally within the likelihood-ratio testing framework and produces statistically interpretable results.

% In practice, this implies that GLAM can qualitatively distinguish between experimental conditions that differ in the extent of gaze bias or attentional influence, even though the estimated numerical values of $\eta$ are not accurate in an absolute sense. Thus, while GLAM should not be relied upon for precise parameter inference or hypothesis testing, it can still provide an approximate descriptive summary of how gaze modulates evidence accumulation in decision making.
% Multiplicative:
% \begin{equation*}
% M(v, r_A, r_B)=\begin{cases}
% 			\kappa(r_A-\eta r_B) & \text{if }v=\mathrm{A} ,\\
% 			\kappa(r_B-\eta r_A)\textcolor{blue}{\text{or  } \kappa(\eta r_A-r_B)???} & \text{if }v=\mathrm{B} ,\\
% 		\end{cases}
% \end{equation*}
% Additive:
% \begin{equation*}
% A(v, r_A, r_B)=\begin{cases}
% 			\kappa(r_A-r_B+\beta) & \text{if }v=\mathrm{A} ,\\
% 			\kappa(r_A-r_B-\beta) & \text{if }v=\mathrm{B} ,\\
% 		\end{cases}
% \end{equation*}

\section{Conclusion}
In this paper, we present several simple yet compelling examples demonstrating that ``maximum likelihood'' estimators based on TADA yield inconsistent estimates of the aDDM parameters. This inconsistency is a natural consequence of the fact that TADA does not define a well-specified forward generative model (because the TADA drift at a given time is only determined by the {\em future} decision time).   
%(using relative fixation duration as a covariate is post-hoc, one needs to observe the corresponding reaction time first to extract this summary). 
We establish this inconsistency through theoretical derivations for the DDM with a single absorbing boundary, support it with numerical simulations, and further demonstrate it numerically for the DDM with two absorbing boundaries, as well as for the full aDDM. For the aDDM we also provide an example illustrating the deleterious effects of inconsistency on downstream scientific conclusions. 

Numerous recent studies on the aDDM have relied on TADA or similar approximations that ignore the fixation path \citep{lupkin2023monkeys, sepulveda2020visual, ramirez2022optimal, yang2023dynamic, yang2024attention, ting2024unraveling, zilker2024attentional, thomas2019gaze, thomas2021uncovering, cavanagh2014eye, smith2019estimating}. %Since TADA is based on an intuitive approximation rather than a rigorous mathematical foundation, there is no general theoretical guarantee that this ``approximation'' is reliable across settings. 
This paper suggests that such an approach may not be as reliable as previously assumed, owing to the demonstrated inconsistency of the resulting estimators. We consider our findings as a note of caution in that regard. 

% More importantly, when TADA is used for downstream scientific analyses, this inconsistency can lead to incorrect substantive conclusions. 

We note that although this work focuses on the canonical multiplicative form of the aDDM, our main conclusion regarding estimator inconsistency extends to other model variants as well, as long as they also admit a similar TADA type estimator described in Section \ref{sec: glam}. More broadly, attention-driven evidence accumulation for preferential choice has a long history in the sequential-sampling tradition, from decision field theory and its multi-alternative extension \citep{busemeyer1993decision, roe2001multialternative} to models in which the drift is explicitly piecewise and switches with attention, such as multistage diffusion models \citep{diederich2016multistage} and multi-attribute accumulator models \citep{holmes2016new, trueblood2014multiattribute}.
While these models vary in details, these capture the notion that decision makers can selectively attend to different attributes of each choice option, constructing their preferences for one option over the other via sequential sampling of momentary preferences. Typically, such models have either not been fit quantitatively to experimental data \citep{roe2001multialternative}, relied on TADA approaches \citep{yang2023dynamic} or expensive simulations \citep{trueblood2026}. Furthermore, beyond the attention effect, constant-parameter approximations to DDMs with time-dependent drift terms, where the ``effective'' constant parameter depends on the first passage time itself, should generally not be expected to yield reliable statistical conclusions.

Our findings underscore the need for a fast and statistically sound algorithm for computing the likelihood function of the aDDM, either exactly or through mathematically justified approximations. Our parallel work \citep{liu2025efficient} introduces such a fast and accurate algorithm \textsc{efpt} for evaluating first passage time densities for DDMs, achieving state-of-the-art inference speed and outperforming previous methods \citep{shinn2020flexible} by orders of magnitude for aDDMs. In fact, this work enabled the efficient execution of the numerical experiments presented in this paper. We therefore recommend that future aDDM analyses use likelihood-based methods such as those developed in \citet{liu2025efficient} and the accompanying software in \href{https://github.com/RiverFlowsInYou98/efficient-fpt}{https://\allowbreak github.com/\allowbreak RiverFlowsInYou98/\allowbreak efficient-fpt}, and that existing results obtained using TADA be reexamined when appropriate. The algorithms in \citet{liu2025efficient} are grounded in rigorous mathematical theory while simultaneously offering computational efficiency comparable to TADA. We expect that the core ideas underlying the algorithms of \citet{liu2025efficient} can be extended to the model classes reviewed in the preceding paragraph. Developments in this direction would provide statistically grounded alternatives to TADA-type or simulation-based fitting procedures much more broadly. More generally, we view the development of fast and reliable likelihood algorithms for increasingly general classes of DDMs as an important direction for future work.

% \subsubsection*{Acknowledgements}
% This work was supported by the National Institute of Mental Health (grants P50 MH119467-01 and P50 MH106435-06A1), the Office of Naval Research (MURI Award N00014-23-1-2792), and the Brainstorm Program at the Robert J. and Nancy D. Carney Institute for Brain Science. S.L. was additionally supported by the JBB Endowed Graduate Fellowship in Brain Science.

\section*{Declarations}
\subsection*{Acknowledgements}
This work was supported by the National Institute of Mental Health (grants P50 MH119467-01 and P50 MH106435-06A1), the Office of Naval Research (MURI Award N00014-23-1-2792), and the Brainstorm Program at the Robert J. and Nancy D. Carney Institute for Brain Science. S.L. was additionally supported by the JBB Endowed Graduate Fellowship in Brain Science.

\subsection*{Contributions}
M.J.F. and A.F. raised the research question and provided expertise on the computational cognitive science applications. M.T.H. and S.L. developed the methodological and theoretical framework. S.L. carried out the research, validated the results, and prepared the associated GitHub repository with support from all authors. S.L. wrote the original draft with feedback and revisions from all authors. M.J.F. secured funding for the project. M.T.H. supervised the project.

\subsection*{Data Availability}

The code supporting this study will be made publicly available at \href{https://github.com/RiverFlowsInYou98/TADA}{https://\allowbreak github.com/\allowbreak RiverFlowsInYou98/TADA} upon publication.

\subsection*{Competing interests}

The authors declare no competing interests.

%\newpage
\bibliographystyle{elsarticle-harv}
\bibliography{bibtex}

\newpage
\appendix

\section{Proof of Theorem \ref{thm: 2.1}}\label{sec: proof_prelim}
Here we present a self-contained derivation of both \eqref{eqn: thm211} and \eqref{eqn: thm212} using Kolmogorov forward equations. The connections of diffusion processes and partial differential equations (PDEs) are well-established results in stochastic analysis and can be found in standard references, such as \cite{karatzas1991brownian, oksendal2003stochastic, weinan2021applied}. This approach has the advantage of being easily generalizable to more complex diffusion processes and two-sided boundaries.

\begin{proof}[\proofname \text{ of Theorem \ref{thm: 2.1}.1}]
Without loss of generality we assume $b>x_0$ (i.e. $x=b$ is the upper boundary). Denote the NPD under $\mathbb{P}^{x_0}$ by $u(x, t)$ where $x<b$. It satisfies the following initial-boundary value problem of the Kolmogorov forward equation\footnote{Also known as Fokker-Planck equation in physics.} (see the Chapter 8 of \cite{weinan2021applied}):
\begin{equation}
\begin{alignedat}{2}\label{eqn: fpe}
    u_t &= -\mu u_x + \frac{1}{2}\sigma^2 u_{xx} &\quad & \text{for }t > 0, \; x \in (-\infty, b) \\
    u(x, 0) &= \delta(x - x_0) &\quad &\text{for } x \in (-\infty, b) \\
    u(b, t) &= 0 &\quad &\text{for } t > 0
\end{alignedat}
\end{equation}
where $\delta(\cdot)$ is the Dirac delta measure at $0$. To solve \eqref{eqn: fpe}, let 
\begin{equation*}
u(x, t)=e^{\frac{\mu}{\sigma^2} (x-x_0)-\frac{\mu^2}{2\sigma^2}t}v(x-x_0, t),
\end{equation*} and $v$ is the Green's function that satisfies
\begin{equation*}
\begin{alignedat}{2}
v_t&=\frac{1}{2}\sigma^2 v_{yy} &\quad & \text{for }t > 0, \; y \in (-\infty, b-x_0) \\
v(y, 0)&=\delta(y)&\quad &\text{for } y \in (-\infty, b-x_0) \\
v(b-x_0, t)&=0&\quad &\text{for } t > 0
\end{alignedat}
\end{equation*}
By method of images, we get 
\begin{equation*}
v(y,t)=\frac{1}{\sqrt{2\pi\sigma^2 t}} \Big(e^{-\frac{y^2}{2\sigma^2 t}}-e^{-\frac{(y-2(b-x_0))^2}{2\sigma^2 t}}\Big)
\end{equation*}
Correspondingly, 
\begin{equation}\label{eqn: 221result}
u(x,t)=\frac{1}{\sqrt{2\pi\sigma^2 t}}e^{\frac{\mu}{\sigma^2} (x-x_0)-\frac{\mu^2}{2\sigma^2}t} \Big(e^{-\frac{(x-x_0)^2}{2\sigma^2 t}}-e^{-\frac{(x+x_0-2b)^2}{2\sigma^2 t}}\Big)
\end{equation}
for $x<b$.
\end{proof}
\begin{proof}[\proofname \text{ of Theorem \ref{thm: 2.1}.2}] Here we prove the case where $b>x_0$ and $\mu\ge0$. Denote the density of $\tau_b$ under $\mathbb{P}^{x_0}$ as $f_{\tau_b}(t)$, by definition
\begin{equation*}
f_{\tau_b}(t)=-\frac{\diff}{\diff t} \mathbb{P}^{x_0}(\tau_b>t)
\end{equation*}
$\mathbb{P}^{x_0}(\tau_b>t)=\int_{-\infty}^b u(x,t)\diff x$ represents the net probability flow rate through $x=b$ (see the Chapter 8 of \cite{weinan2021applied}). We can further use the PDE \eqref{eqn: fpe} to continue to write
\begin{equation*}
\begin{aligned}
f_{\tau_b}(t)&=-\frac{\diff}{\diff t} \int_{-\infty}^b u(x,t)\diff x\\
&=- \int_{-\infty}^b u_t(x,t)\diff x\\
&=-\int_{-\infty}^b \Big(-\mu u_x(x,t) + \frac{1}{2}\sigma^2 u_{xx}(x,t)\Big) \diff x\\
&=\mu\int_{-\infty}^b  u_x(x,t)\diff x - \frac{1}{2}\sigma^2 \int_{-\infty}^b u_{xx}(x,t) \diff x\\
&=\mu(u(b, t)-u(-\infty, t)) - \frac{1}{2}\sigma^2 (u_x(b, t)-u_x(-\infty, t))\\
&=- \frac{1}{2}\sigma^2 u_x(b, t)\\
\end{aligned}
\end{equation*}
$u$ is given by \eqref{eqn: 221result}, from which we can compute
\begin{equation*}
\begin{aligned}
f_{\tau_b}(t)
% =&-\frac{1}{2}\sigma^2\left\{\frac{1}{\sqrt{2\pi\sigma^2t}}\Big[\frac{\mu}{\sigma^2} e^{\frac{\mu}{\sigma^2} (x-x_0)-\frac{\mu^2}{2\sigma^2}t}\Big(e^{-\frac{(x-x_0)^2}{2 \sigma^2t}}-e^{-\frac{(x+x_0-2b)^2}{2 \sigma^2 t}}\Big) \right.\\
% &\left.+e^{\frac{\mu}{\sigma^2} (x-x_0)-\frac{\mu^2}{2\sigma^2}t}\Big(-e^{-\frac{(x-x_0)^2}{2 \sigma^2t}}\frac{x-x_0}{\sigma^2 t}+e^{-\frac{(x+x_0-2b)^2}{2 \sigma^2 t}}\frac{x+x_0-2b}{\sigma^2 t}\Big) \Big]\right\}_{x=b}\\
% =&-\frac{\sigma^2}{2\sqrt{2\pi\sigma^2t}}\Big[\frac{\mu}{\sigma^2} e^{\frac{\mu}{\sigma^2}(b-x_0)-\frac{ \mu^2}{2 \sigma^2}t}\Big(e^{-\frac{(b-x_0)^2}{2 \sigma^2t}}-e^{-\frac{(b+x_0-2b)^2}{2 \sigma^2 t}}\Big) \\
% &+e^{\frac{\mu }{\sigma^2}(b-x_0)-\frac{ \mu^2}{2 \sigma^2}t}\Big(-e^{-\frac{(b-x_0)^2}{2 \sigma^2t}}\frac{b-x_0}{\sigma^2 t}+e^{-\frac{(b+x_0-2b)^2}{2 \sigma^2 t}}\frac{b+x_0-2b}{\sigma^2 t}\Big)\Big]\\
% =&\frac{\sigma^2}{\sqrt{2\pi\sigma^2t}}e^{\frac{\mu }{\sigma^2}(b-x_0)-\frac{ \mu^2}{2 \sigma^2}t}e^{-\frac{(b-x_0)^2}{2 \sigma^2 t}}\frac{b-x_0}{\sigma^2 t}\\
=&\frac{b-x_0}{\sqrt{2 \pi \sigma^2 t^3}} e^{-\frac{\left(b-x_0-\mu t\right)^2}{2 \sigma^2 t}} 
\end{aligned}
\end{equation*}

\end{proof}

\section{Log-concavity of the Likelihood \eqref{eqn: addmllhd}}\label{sec: logconcave}
In this section we prove that the likelihood function \eqref{eqn: addmllhd} is a strict log-concave function of $\mu$. It suffices to show that $\frac{\partial^2}{\partial\mu^2}\log f_{\text{aDDM}}(\tau;\mu)<0$.

 When $\tau \le T$, we have $\frac{\partial^2}{\partial\mu^2}\log f_\text{aDDM}(\tau;\mu)=-\tau <0$; When $\tau > T$, we write
\begin{equation*}
\begin{aligned}
f_\text{aDDM}(\tau;\mu)=&\int_{-\infty}^b \frac{1}{\sqrt{2\pi T}}e^{\mu x-\frac{\mu^2}{2}T} \Big(e^{-\frac{x^2}{2T}}-e^{-\frac{(x-2b)^2}{2T}}\Big)\frac{b-x}{\sqrt{2\pi (\tau-T)^3}} e^{-\frac{(b-x)^2}{2(\tau-T)}} \diff x\\
\triangleq& \frac{e^{-\frac{\mu^2}{2}T}}{2\pi\sqrt{ T(\tau-T)^3}} \int_{-\infty}^b e^{\mu x} g(x) \diff x
\end{aligned}
\end{equation*}
where
\begin{equation*}
g(x)\triangleq (b-x)\Big(e^{-\frac{x^2}{2T}}-e^{-\frac{(x-2b)^2}{2T}}\Big) e^{-\frac{(b-x)^2}{2(\tau-T)}}
\end{equation*}
is independent of $\mu$. When $\mu$ is in any compact sub-interval in $\mathbb{R}$, $|x e^{\mu x} g(x)|$ and $|x^2 e^{\mu x} g(x)|$ can be easily dominated by $L^1$ functions, hence by Lebesgue's dominated convergence theorem we have
\begin{equation*}
\begin{aligned}
\frac{\partial }{\partial \mu} \int_{-\infty}^b e^{\mu x} g(x)\diff x&= \int_{-\infty}^b x e^{\mu x} g(x)\diff x\\
\frac{\partial^2 }{\partial \mu^2} \int_{-\infty}^b e^{\mu x} g(x)\diff x&= \int_{-\infty}^b x^2 e^{\mu x} g(x)\diff x\\
\end{aligned}
\end{equation*}
and consequently,
\begin{equation}\label{eqn: 2nd_deriv}
\begin{aligned}
% \log f_\text{aDDM}(\tau;\mu)&=-\frac{T}{2}\mu^2+\log\int_{-\infty}^b e^{\mu x} g(x)\diff x\\
% \frac{\partial}{\partial \mu}\log f_{\text{aDDM}}(\tau;\mu)&=-T\mu +\frac{\int_{-\infty}^b xe^{\mu x} g(x)\diff x}{\int_{-\infty}^b e^{\mu x} g(x)\diff x}\\
\frac{\partial^2}{\partial \mu^2}\log f_{\text{aDDM}}(\tau;\mu)&=-T +\frac{\int_{-\infty}^b x^2 e^{\mu x} g(x)\diff x}{\int_{-\infty}^b e^{\mu x} g(x)\diff x }-\left(\frac{\int_{-\infty}^b xe^{\mu x} g(x)\diff x}{\int_{-\infty}^b e^{\mu x} g(x)\diff x }\right)^2
\end{aligned}
\end{equation}
Define the exponential tilting of $g$ as
\begin{equation*}
h_\mu(x)=\frac{e^{\mu x} g(x)}{\int_{-\infty}^b e^{\mu x} g(x)\diff x}
\end{equation*}
$h_\mu(x)$ is a probability density function on $(-\infty, b)$. Let $Y_\mu$ be a random variable with density $h_\mu(x)$, then we can interpret Eq.\eqref{eqn: 2nd_deriv} as
\begin{equation}\label{eqn: 2nd_deriv_var}
\frac{\partial^2}{\partial\mu^2} \log f_\text{aDDM}(\tau;\mu)=-T+\var[Y_\mu]
\end{equation}
Let $V(x)=-\log h_\mu(x)$ so that $h_{\mu}(x)=e^{-V(x)}$, we can compute that
\begin{equation*}
V''(x)=\frac{1}{T}+\frac{1}{\tau-T}+\frac{1}{(x-b)^2}+\Big(\frac{2b}{T}\Big)^2\frac{e^{\frac{2b}{T} (x-b)}}{\big(1- e^{\frac{2b}{T}(x-b)}\big)^2}> \frac{1}{T}+\frac{1}{\tau-T}
\end{equation*}
i.e., $V$ is $\frac{1}{T}+\frac{1}{\tau-T}$-strongly convex. By Poincar\'{e}'s inequality for strictly log-concave measures (see, for example Proposition 10.1 of \cite{saumard2014log}) we have
\begin{equation*}
\var[Y_\mu]\le \frac{1}{\frac{1}{T}+\frac{1}{\tau-T}}=\frac{\tau-T}{\tau} T< T
\end{equation*}
Hence from Eq.\eqref{eqn: 2nd_deriv_var} we have $\frac{\partial^2}{\partial\mu ^2}\log f_{\text{aDDM}} (\tau; \mu) <0$ for $\tau > T$.\qed

\section{Regularity Conditions for the Consistency of $\widehat{\mu}_n^{\text{ML}}$}\label{sec: regularities}
When $\tau_1, \cdots, \tau_n$ are independent and identically distributed as $\tau$, we can show the consistency of $\widehat{\mu}_n^{\text{ML}}=\widehat{\mu}_n^{\text{ML}}(\tau_1, \cdots, \tau_n)$.

First, we write the likelihood equation
\begin{equation}\label{eqn: llhd_eqn}
\frac{\partial}{\partial\mu} \ell_n^{\text{aDDM}}(\mu\mid \tau_1, \cdots, \tau_n)=0
\end{equation}
where $\ell_n^{\text{aDDM}}(\mu\mid \tau_1, \cdots, \tau_n)\triangleq-\sum_{i=1}^n \log f_{\text{aDDM}}(\tau_i;\mu)$ is the negative log-likelihood function. We prove that Eq.\eqref{eqn: llhd_eqn} has a root $\widehat{\mu}_n=\widehat{\mu}_n(\tau_1, \cdots, \tau_n)$ that converges to the true value $\mu_0$ in probability by verifying the following regularity conditions (see theorem 6.3.7 of \cite{lehmann2006theory}):
\begin{itemize}
    \item Identifiability: We aim to show that  $f_{\text{aDDM}}(\tau; \mu_1)=f_{\text{aDDM}}(\tau;\mu_2)$ for all $\tau>0$ implies $\mu_1=\mu_2$. Considering when $\tau\le T$, $f_{\text{aDDM}}(\tau;\mu_1)=f_{\text{aDDM}}(\tau;\mu_2)$ immediately leads to
\begin{equation*}
(b-\mu_1\tau)^2=(b-\mu_2\tau)^2
\end{equation*}
holds for any $0<\tau \le T$, so either $\mu_1=\mu_2$ or $(\mu_1+\mu_2)\tau=2b$ for $0<\tau \le T$. The latter case implies $\mu_1+\mu_2=b=0$, which contradicts the fact that $b$ is positive, so $\mu_1$ must equal $\mu_2$. When $\tau>T$, $\mu_1=\mu_2$ automatically gives
$f_{\mathrm{aDDM}}(\tau;\mu_1)=f_{\mathrm{aDDM}}(\tau;\mu_2)$. Hence, we conclude that $\mu_1=\mu_2$.

\item Common support: We claim that for any $\mu\in \mathbb{R}$, $f_{\text{aDDM}}(\tau;\mu)$ has common support $\tau\in(0,\infty)$. When $0<\tau\le T$ this is obvious; when $\tau>T$, lemma \ref{lemma: 2.4} implies $\varphi\Big((T\mu+b) \sqrt{\frac{\tau-T}{2T\tau}}\Big)>\varphi\Big((T\mu-b) \sqrt{\frac{\tau-T}{2T\tau}}\Big)$ where $\varphi(x)\triangleq xe^{x^2}\operatorname{Erfc}(x)$. This inequality establishes the positivity of $f_{\text{aDDM}}(\tau;\mu)$.

\item The true $\mu$ lies in the interior of the parameter space $\mathbb{R}$ as $\mathbb{R}$ is an open set.

\item For a.e. $\tau$, $f_{\text{aDDM}}(\tau;\mu)$ is differentiable w.r.t. $\mu$.

\end{itemize}
so $\widehat{\mu}_n$ converges in probability to the true $\mu$ as $n\rightarrow \infty$. 

In \ref{sec: logconcave} we have proven that $\log f_{\text{aDDM}}(\tau;\mu)$ is a strictly concave function of $\mu$, so $\widehat{\mu}_n$ is in fact the maximum likelihood estimator $\widehat{\mu}_n^{\text{ML}}$. So  $\widehat{\mu}_n^{\text{ML}}$ is consistent.

\section{Supplemental Derivations}
\subsection{Derivation of Eq.\eqref{eqn: addmllhd}}\label{sec: deriv1}
When $\tau>T$, we have
\begin{equation*}
\begin{aligned}
&f_{\text{aDDM}}(\tau;\mu)\\
=&\int_{-\infty}^b \tfrac{1}{\sqrt{2\pi T}}e^{\mu x-\tfrac{\mu^2}{2}T} \Big(e^{-\frac{x^2}{2T}}-e^{-\frac{(x-2b)^2}{2T}}\Big)\tfrac{b-x}{\sqrt{2\pi (\tau-T)^3}} e^{-\frac{(b-x)^2}{2(\tau-T)}} \diff x\\
\triangleq& I_1-I_2\\
\end{aligned}    
\end{equation*}
where
\begin{equation*}
\begin{aligned}
I_1&\triangleq\int_{-\infty}^b \tfrac{1}{\sqrt{2\pi T}}e^{\mu x-\frac{\mu^2}{2}T} e^{-\frac{x^2}{2T}}\tfrac{b-x}{\sqrt{2\pi (\tau-T)^3}} e^{-\frac{(b-x)^2}{2(\tau-T)}} \diff x\\
I_2&\triangleq\int_{-\infty}^b \tfrac{1}{\sqrt{2\pi T}}e^{\mu x-\frac{\mu^2}{2}T} e^{-\frac{(x-2b)^2}{2T}}\tfrac{b-x}{\sqrt{2\pi (\tau-T)^3}} e^{-\frac{(b-x)^2}{2(\tau-T)}} \diff x
\end{aligned}
\end{equation*}
We can compute
\begin{alignat*}{1}
I_1
=&\int_{-\infty}^b \tfrac{1}{\sqrt{2\pi T}}\tfrac{b-x}{\sqrt{2\pi (\tau-T)^3}} e^{\mu x-\frac{\mu^2}{2}T-\frac{x^2}{2T}-\frac{(b-x)^2}{2(\tau-T)}} \diff x\\
&\text{(Complete the square)}\\
=&\int_{-\infty}^b \tfrac{1}{\sqrt{2\pi T}}\tfrac{b-x}{\sqrt{2\pi (\tau-T)^3}} e^{-\frac{(T\mu-b)^2}{2\tau}} e^{-\frac{\tau}{2T (\tau-T)}(x-T\mu-\frac{Tb}{\tau}+\frac{T^2\mu}{\tau})^2}\diff x
\\
&\text{(Let $M_1=\tfrac{\tau-T}{\tau}(T\mu-b), N =\tfrac{\tau-T}{\tau}$)}\\
=&\tfrac{e^{-\frac{(T \mu-b)^2}{2 \tau}}}{2 \pi \sqrt{T(\tau-T)^{3}}} \int_{-\infty}^b (b-x) e^{-\frac{(x-M_1-b)^2}{2T N }}\diff x\\
% =&\int_{-\infty}^b \tfrac{1}{\sqrt{2\pi T}}\tfrac{b-x}{\sqrt{2\pi (\tau-T)^3}} e^{-\frac{(T\mu-b)^2}{2\tau}} e^{-\frac{\tau}{2T (\tau-T)}(x-M_1-b)^2}\diff x\\
% =&\tfrac{e^{-\frac{(T \mu-b)^2}{2 \tau}}}{2 \pi \sqrt{T(\tau-T)^{3}}} \int_0^{\infty} y e^{-\frac{\tau}{2 T(\tau-T)}(y+M_1)^2} \diff y (y=b-x)\\
&\text{(Let $y=-x+M_1+b$)}\\
=&\tfrac{e^{-\frac{(T \mu-b)^2}{2 \tau}}}{2 \pi \sqrt{T(\tau-T)^{3}}} \int_{M_1}^{\infty} (y-M_1) e^{-\frac{y^2}{2 TN }} \diff y\\
=&\tfrac{e^{-\frac{(T \mu-b)^2}{2 \tau}}}{2 \pi \sqrt{T(\tau-T)^{3}}} \left(\int_{M_1}^{\infty} ye^{-\frac{y^2}{2 TN }} \diff y-M_1
\int_{M_1}^{\infty} e^{-\frac{y^2}{2 TN }} \diff y\right)\\
=&\tfrac{e^{-\frac{(T \mu-b)^2}{2 \tau}}}{2 \pi \sqrt{T(\tau-T)^{3}}} \Big(TN  e^{-\frac{M_1^2}{2TN }}-M_1 \sqrt{\tfrac{\pi T N }{2}}\operatorname{Erfc}\Big(\tfrac{M_1}{\sqrt{2TN }}\Big)\Big)\\
% =&\tfrac{e^{-\frac{(T \mu-b)^2}{2 \tau}}}{2 \pi \sqrt{T(\tau-T)}} \tfrac{T}{\tau}\left[e^{-\frac{(T\mu-b)^2(\tau-T)}{2T\tau}}-(T \mu-b) \sqrt{\tfrac{\pi}{2} \tfrac{\tau-T}{T\tau}}\operatorname{Erfc}\Big((T\mu-b)\sqrt{\tfrac{\tau-T}{2T\tau}}\Big)\right]\\
&\text{(Substitute $M_1, N$, and simplify)}\\
=&\tfrac{T e^{-\frac{(T \mu-b)^2}{2 T}}}{2 \pi \tau\sqrt{T(\tau-T)}}-
\tfrac{e^{-\frac{(T \mu-b)^2}{2 \tau}}}{2 \sqrt{2\pi\tau^3} } (T \mu-b) \operatorname{Erfc}\Big((T\mu-b)\sqrt{\tfrac{\tau-T}{2T\tau}}\Big)
\end{alignat*}
and similarly
\begin{alignat*}{1}
I_2=&\int_{-\infty}^b \tfrac{1}{\sqrt{2\pi T}}\tfrac{b-x}{\sqrt{2\pi (\tau-T)^3}} e^{\mu x-\frac{\mu^2}{2}T-\frac{(x-2b)^2}{2T}-\frac{(b-x)^2}{2(\tau-T)}} \diff x\\
&\text{(Complete the square)}\\
=&\int_{-\infty}^b \tfrac{1}{\sqrt{2\pi T}}\tfrac{b-x}{\sqrt{2\pi (\tau-T)^3}} e^{-\frac{(T\mu+b)^2}{2\tau}+2\mu b}  e^{-\frac{\tau}{2T(\tau-T)}(x-T\mu+\frac{Tb}{\tau}+\frac{T^2\mu}{\tau}-2b)^2} \diff x\\
&\text{(Let $M_2=\tfrac{\tau-T}{\tau}(T\mu+b), N=\tfrac{\tau-T}{\tau}$)}\\
=&\tfrac{e^{-\frac{(T\mu+b)^2}{2\tau}+2\mu b}}{2\pi\sqrt{T (\tau-T)^3}}\int_{-\infty}^b (b-x) e^{-\frac{(x-M_2-b)^2}{2TN}}  \diff x\\
&\text{(Let $y=-x+M_2+b$)}\\
=&\tfrac{e^{-\frac{(T\mu+b)^2}{2\tau}+2\mu b}}{2\pi\sqrt{T (\tau-T)^3}}\int_{M_2}^\infty (y-M_2) e^{-\frac{y^2}{2TN}}  \diff y
\\
=&\tfrac{e^{-\frac{(T\mu+b)^2}{2\tau}+2\mu b}}{2\pi\sqrt{T (\tau-T)^3}}\left(\int_{M_2}^\infty y e^{-\frac{y^2}{2TN}}  \diff y-M_2\int_{M_2}^\infty e^{-\frac{y^2}{2TN}}  \diff y\right)\\
=&\tfrac{e^{-\frac{(T\mu+b)^2}{2\tau}+2\mu b}}{2\pi\sqrt{T (\tau-T)^3}}\Big(TN e^{-\frac{M_2^2}{2TN}}-M_2 \sqrt{\tfrac{\pi T N}{2}}\operatorname{Erfc}\Big(\tfrac{M_2}{\sqrt{2TN}}\Big)\Big)\\
% =&\tfrac{e^{-\frac{(T \mu+b)^2}{2 \tau}+2 \mu b}}{2 \pi \sqrt{T(\tau-T)^3}}\left[\tfrac{T(\tau-T)}{\tau} e^{-\frac{(T\mu+b)^2(\tau-T)}{2T\tau}}-\tfrac{\tau-T}{\tau}(T\mu+b)\sqrt{\tfrac{\pi T(\tau-T)}{2 \tau}} \operatorname{Erfc}\Big((T\mu +b) \sqrt{\tfrac{\tau-T}{2 T \tau}}\Big)\right]\\
% =&\tfrac{e^{-\frac{(T \mu+b)^2}{2 \tau}+2 \mu b}}{2 \pi \sqrt{T(\tau-T)}}\tfrac{T}{\tau}\left[ e^{-\frac{(T\mu+b)^2(\tau-T)}{2T\tau}}+(T\mu+b)  \sqrt{\tfrac{\pi}{2}\tfrac{\tau-T}{ T\tau}} \operatorname{Erfc}\Big((T\mu +b) \sqrt{\tfrac{\tau-T}{2 T \tau}}\Big)\right]\\
&\text{(Substitute $M_2, N$, and simplify)}\\
=&\tfrac{T e^{-\frac{(T \mu-b)^2}{2 T}}}{2 \pi \tau\sqrt{T(\tau-T)}}-
\tfrac{e^{-\frac{(T \mu+b)^2}{2 \tau}+2 \mu b}}{2 \sqrt{2\pi \tau^3} }(T\mu+b) \operatorname{Erfc}\Big((T\mu +b) \sqrt{\tfrac{\tau-T}{2 T \tau}}\Big)
\end{alignat*}
Subtract $I_2$ from $I_1$ yields the desired result in \eqref{eqn: addmllhd}.\qed

\subsection{Derivation of Eq.\eqref{eqn: 18}}\label{sec: deriv2}
\begin{align*}
\mathbb{E}[\tau\mathbbm{1}_{\tau\le T}]=&\int_0^T \tfrac{b}{\sqrt{2\pi t}}e^{-\frac{(\mu t-b)^2}{2t}} \diff t\\
=&\tfrac{b}{\sqrt{2\pi} \mu } \int_0^T \left(\tfrac{\mu t+b}{2t^{3/2}}+\tfrac{\mu t-b}{2t^{3/2}}\right)e^{-\frac{(\mu t-b)^2}{2t}} \diff t\\
=&\tfrac{b}{\sqrt{2\pi} \mu } \bigg(\int_0^T \tfrac{\mu t+b}{2t^{3/2}}e^{-\frac{(\mu t-b)^2}{2t}} \diff t+e^{2b\mu}\int_0^T \tfrac{\mu t-b}{2t^{3/2}}e^{-\frac{(\mu t+b)^2}{2t}} \diff t\bigg)\\
&\text{(Let $y=\tfrac{\mu t-b}{\sqrt{t}}, z=\tfrac{\mu t+b}{\sqrt{t}}$, then $\diff y=\tfrac{\mu t +b}{2 t^{3/2}}\diff t, \diff z=\tfrac{\mu t- b}{2 t^{3/2}}\diff t$)}\\
=&\tfrac{b}{\sqrt{2\pi} \mu } \bigg(\int_{-\infty}^{\frac{T\mu-b}{\sqrt{T}}} e^{-\frac{y^2}{2}} \diff y+e^{2b\mu}\int_\infty^{\frac{T\mu +b}{\sqrt{T}}} e^{-\frac{z^2}{2}} \diff z\bigg)\\
% =&\tfrac{b}{\sqrt{2\pi} \mu } \bigg(\sqrt{2 \pi} \cdot \tfrac{1}{2} \Big(2-\operatorname{Erfc}\Big(\tfrac{T \mu-b}{\sqrt{2 T}}\Big)\Big)+e^{2b\mu}\sqrt{2 \pi} \cdot \tfrac{1}{2} \operatorname{Erfc}\left(\tfrac{\mu T+b}{\sqrt{2 T}}\right)\bigg)\\
=&\tfrac{b}{\mu}\left(1-\tfrac{1}{2}\operatorname{Erfc}\Big(\tfrac{T\mu -b}{\sqrt{2T}}\Big)-\tfrac{1}{2}e^{2 b \mu} \operatorname{Erfc}\Big(\tfrac{T\mu +b}{\sqrt{2T}}\Big)\right)
\end{align*}
\qed

\end{document}